\newtheorem{definition}{Definition}
\newtheorem{proposition}[definition]{Proposition}
\newtheorem{lemma}[definition]{Lemma}
\newtheorem{theorem}[definition]{Theorem}
\newtheorem{corollary}[definition]{Corollary}
\newtheorem{conjecture}[definition]{Conjecture}
\newtheorem{remark}[definition]{Remark}
\newtheorem{example}[definition]{Example}
\newtheorem{question}[definition]{Question}
\newtheorem{memo}[definition]{Memo}
\def\squareforqed{\hbox{\rlap{$\sqcap$}$\sqcup$}}
\def\qed{\ifmmode\squareforqed\else{\unskip\nobreak\hfil
\penalty50\hskip1em\null\nobreak\hfil\squareforqed
\parfillskip=0pt\finalhyphendemerits=0\endgraf}\fi}
\def\endenv{\ifmmode\;\else{\unskip\nobreak\hfil
\penalty50\hskip1em\null\nobreak\hfil\;
\parfillskip=0pt\finalhyphendemerits=0\endgraf}\fi}
\newenvironment{proof}{\noindent \textbf{{Proof.~} }}{\qed}
\def\Dbar{\leavevmode\lower.6ex\hbox to 0pt
{\hskip-.23ex\accent"16\hss}D}
\def\url@leostyle{%
  \@ifundefined{selectfont}{\def\UrlFont{\sf}}{\def\UrlFont{\small\ttfamily}}}
\def\bcj{\begin{conjecture}}
\def\ecj{\end{conjecture}}
\def\bcr{\begin{corollary}}
\def\ecr{\end{corollary}}
\def\bd{\begin{definition}}
\def\ed{\end{definition}}
\def\bea{\begin{eqnarray}}
\def\eea{\end{eqnarray}}
\def\beq{\begin{equation}}
\def\eeq{\end{equation}}
\def\bal{\begin{aligned}}
\def\eal{\end{aligned}}
\def\bem{\begin{enumerate}}
\def\eem{\end{enumerate}}
\def\bex{\begin{example}}
\def\eex{\end{example}}
\def\bim{\begin{itemize}}
\def\eim{\end{itemize}}
\def\bl{\begin{lemma}}
\def\el{\end{lemma}}
\def\bma{\begin{bmatrix}}
\def\ema{\end{bmatrix}}
\def\bpf{\begin{proof}}
\def\epf{\end{proof}}
\def\bpp{\begin{proposition}}
\def\epp{\end{proposition}}
\def\bqu{\begin{question}}
\def\equ{\end{question}}
\def\br{\begin{remark}}
\def\er{\end{remark}}
\def\bt{\begin{theorem}}
\def\et{\end{theorem}}
\def\bmm{\begin{memo}}
\def\emm{\end{memo}}
\def\btb{\begin{tabular}}
\def\etb{\end{tabular}}
\newcommand{\nc}{\newcommand}
\def\l{\lambda}
\def\s{\sigma}
\nc{\bbA}{\mathbb{A}} \nc{\bbB}{\mathbb{B}} \nc{\bbC}{\mathbb{C}}
 \nc{\bbD}{\mathbb{D}} \nc{\bbE}{\mathbb{E}} \nc{\bbF}{\mathbb{F}}
 \nc{\bbG}{\mathbb{G}} \nc{\bbH}{\mathbb{H}} \nc{\bbI}{\mathbb{I}}
 \nc{\bbJ}{\mathbb{J}} \nc{\bbK}{\mathbb{K}} \nc{\bbL}{\mathbb{L}}
 \nc{\bbM}{\mathbb{M}} \nc{\bbN}{\mathbb{N}} \nc{\bbO}{\mathbb{O}}
 \nc{\bbP}{\mathbb{P}} \nc{\bbQ}{\mathbb{Q}} \nc{\bbR}{\mathbb{R}}
 \nc{\bbS}{\mathbb{S}} \nc{\bbT}{\mathbb{T}} \nc{\bbU}{\mathbb{U}}
 \nc{\bbV}{\mathbb{V}} \nc{\bbW}{\mathbb{W}} \nc{\bbX}{\mathbb{X}}
 \nc{\bbZ}{\mathbb{Z}}
 \nc{\bA}{{\bf A}} \nc{\bB}{{\bf B}} \nc{\bC}{{\bf C}}
 \nc{\bD}{{\bf D}} \nc{\bE}{{\bf E}} \nc{\bF}{{\bf F}}
 \nc{\bG}{{\bf G}} \nc{\bH}{{\bf H}} \nc{\bI}{{\bf I}}
 \nc{\bJ}{{\bf J}} \nc{\bK}{{\bf K}} \nc{\bL}{{\bf L}}
 \nc{\bM}{{\bf M}} \nc{\bN}{{\bf N}} \nc{\bO}{{\bf O}}
 \nc{\bP}{{\bf P}} \nc{\bQ}{{\bf Q}} \nc{\bR}{{\bf R}}
 \nc{\bS}{{\bf S}} \nc{\bT}{{\bf T}} \nc{\bU}{{\bf U}}
 \nc{\bV}{{\bf V}} \nc{\bW}{{\bf W}} \nc{\bX}{{\bf X}}
 \nc{\bZ}{{\bf Z}}
\nc{\cA}{{\cal A}} \nc{\cB}{{\cal B}} \nc{\cC}{{\cal C}}
\nc{\cD}{{\cal D}} \nc{\cE}{{\cal E}} \nc{\cF}{{\cal F}}
\nc{\cG}{{\cal G}} \nc{\cH}{{\cal H}} \nc{\cI}{{\cal I}}
\nc{\cJ}{{\cal J}} \nc{\cK}{{\cal K}} \nc{\cL}{{\cal L}}
\nc{\cM}{{\cal M}} \nc{\cN}{{\cal N}} \nc{\cO}{{\cal O}}
\nc{\cP}{{\cal P}} \nc{\cQ}{{\cal Q}} \nc{\cR}{{\cal R}}
\nc{\cS}{{\cal S}} \nc{\cT}{{\cal T}} \nc{\cU}{{\cal U}}
\nc{\cV}{{\cal V}} \nc{\cW}{{\cal W}} \nc{\cX}{{\cal X}}
\nc{\cY}{{\cal Y}} \nc{\cZ}{{\cal Z}}
\nc{\hA}{{\hat{A}}} \nc{\hB}{{\hat{B}}} \nc{\hC}{{\hat{C}}}
\nc{\hD}{{\hat{D}}} \nc{\hE}{{\hat{E}}} \nc{\hF}{{\hat{F}}}
\nc{\hG}{{\hat{G}}} \nc{\hH}{{\hat{H}}} \nc{\hI}{{\hat{I}}}
\nc{\hJ}{{\hat{J}}} \nc{\hK}{{\hat{K}}} \nc{\hL}{{\hat{L}}}
\nc{\hM}{{\hat{M}}} \nc{\hN}{{\hat{N}}} \nc{\hO}{{\hat{O}}}
\nc{\hP}{{\hat{P}}} \nc{\hR}{{\hat{R}}} \nc{\hS}{{\hat{S}}}
\nc{\hT}{{\hat{T}}} \nc{\hU}{{\hat{U}}} \nc{\hV}{{\hat{V}}}
\nc{\hW}{{\hat{W}}} \nc{\hX}{{\hat{X}}} \nc{\hZ}{{\hat{Z}}}
\nc{\hn}{{\hat{n}}}
\def\max{\mathop{\rm max}}
\def\tr{\mathop{\rm Tr}}
\def\ox{\otimes}
\newcommand{\braket}[2]{\langle#1|#2\rangle}
\newcommand{\norm}[1]{\lVert#1\rVert}
\begin{document}

\title{On the distillablity conjecture in matrix theory}

\author{Saiqi Liu}\email[]{liu\_saiqi\_sy2324111@buaa.edu.cn}
\affiliation{LMIB(Beihang University), Ministry of education, and School of Mathematical Sciences, Beihang University, Beijing 100191, China}

\author{Lin Chen}\email[]{linchen@buaa.edu.cn (corresponding author)}
\affiliation{LMIB(Beihang University), Ministry of education, and School of Mathematical Sciences, Beihang University, Beijing 100191, China}

\date{\today}

\keywords{distillability conjecture, Kronecker sum, eigenvalues, commutators}


\begin{abstract}
The distillability conjecture of two-copy 4 × 4 Werner states is one of the main open
problems in quantum information. We prove two special cases of the conjecture. The first case occurs when two 4 × 4 matrices $A$, $B$ are both unitarily equivalent to block diagonal matrices with $2$ by $2$ blocks. The second case occurs when $B$ is unitarily equivalent to either $-A$ or $-A^T$. Plus, we propose a simplified version of the distillability conjecture when both $A$ and $B$ are matrices with distinct eigenvalues. 
\end{abstract}

\maketitle


\section{Introduction}

Entanglement distillability conjecture is one of the five key theoretical problems in quantum information \cite{2020Five}. Basically, the conjecture asks whether all non-positive-partial-transpose (NPT) entangled states can be locally converted into pure states asymptotically. This is known to be feasible for any two-qubit entanglement \cite{hhh97}. Many mathematical tools have been developed to study the distillability problem in the past decades, such as the reduction criterion and entanglement witness for activation \cite{hh1999,klc02}. The quantification of the upper bound for distillable entanglement has also been evaluated \cite{rains1999,rains2001}. Next, the partial transpose of bipartite entangled states has been applied to entanglement distillation \cite{dcl00}. The distillability problem can also be considered by using several copies of target states \cite{br03}. On the other hand, distilling entanglement has been applied to the distillation of security keys \cite{dw2005}. The distillability problem has also been tested by semidefinite programming  \cite{vd06}. It was proven that the relation between local ranks and the global rank of a bipartite state is intimately related to the distillability of this state \cite{cd11jpa}. Next, bipartite NPT states of rank at most four have been shown to be distillable \cite{cc08,cd16pra}. The distillability problem is also related to a matrix inequality from zero-entropy dependent on the rank of states \cite{1909.02748}. 

\vspace{7pt}
A special case of distillability conjecture, namely the distillability of two-copy $4\times4$ Werner states has received a lot of attention and progress in recent years
\cite{Shen2018On,QIAN2021139,2021Proving,2023On}. In this paper, we shall further study this two-copy problem by solving some special cases in terms of matrix theory. We establish two specific instances of Conjecture \ref{conjecture: distillability}. The first instance occurs when both matrices \( A \) and \( B \) are unitarily equivalent to block-diagonal matrices composed of \( 2 \times 2 \) blocks. This case is formalized in Theorem \ref{thm: 2 2 bloc-diag} and is demonstrated through Proposition \ref{prop: equiv prob 2 2 bloc-diag} and Lemmas \ref{lemma: 1st case 2 2 bloc-diag} to \ref{lemma: 3rd case 2 2 bloc-diag}. The second instance arises under the conditions where \( B \) is unitarily equivalent to either \( -A \) or \( -A^T \), as presented in Theorem \ref{thm: B = -A^T or -A} and proven through Lemmas \ref{lemma: B = -A^T} and \ref{lemma: B = -A}. These cases provide novel insights into the structural properties of the conjecture and indicate potential approaches for its proof in more general contexts by revealing underlying symmetries and constraints that may extend to broader classes of matrices. In particular, in the second instance, we establish a link between the distillability conjecture and the Frobenius norm of commutators. Furthermore, by exploiting some topological properties, namely density, compactness, and continuity, a simplified version of the distillability conjecture is proposed in Theorem \ref{thm: diag prob set}, which is supported by Lemmas \ref{lemma: phi} and \ref{lemma: dense}. We prove that the distillability conjecture holds if and only if it holds for matrices $A$ and $B$ with distinct eigenvalues.

\vspace{7pt}
The rest of this paper is organized as follows. We review some necessary notions and facts in Sec. \ref{sec:pre}. Then we introduce our results in Sec. \ref{sec:res}, and finally we conclude in Sec. \ref{sec:con}.

\section{Preliminaries}\label{sec:pre}

In this section, we introduce certain notions and facts used in the derivation of this paper. Most of them can be found in \cite{Horn_Johnson_1991}.
\vspace{7pt}
We use $*$ to represent a complex entry of a matrix. For example, $F$ defined as follows is a $2$ by $2$ complex matrix with zero diagonal entries
\begin{align}
    F = \begin{bmatrix}
        0 & * \\
        * & 0
    \end{bmatrix}.
\end{align}

\begin{definition}\label{def: vec}
    The set of all $m$ by $n$ matrices over the complex field $\bbC$ is denoted by $M_{m,n}(\bbC)$, and $M_{n,n}(\bbC)$ is denoted by $M_n(\bbC)$.
\end{definition}

\begin{definition}
    With each matrix $A = [a_{ij}] \in M_{m,n}(\bbC)$, we associate the vector $\operatorname{vec} A$ defined by
    \beq \bal
        \operatorname{vec} A := [a_{11}, \cdots, a_{m1}, a_{12}, \cdots, a_{m2}, \cdots, a_{1n}, \cdots a_{mn}]^T.
    \eal \eeq
\end{definition}

\begin{definition}
    The Kronecker product of $A = [a_{ij}] \in M_{m,n}$ and $B = [b_{ij}] \in M_{p,q}$ is denoted by $A \otimes B$ and is defined to be the block matrix
    \beq \bal
        A \otimes B := \bma
            a_{11} B & \cdots & a_{1n} B \\
            \vdots & \ddots & \vdots \\
            a_{m1} B & \cdots & a_{mn} B
        \ema \in M_{mp, nq}.
    \eal \eeq
\end{definition}

\begin{definition}
    The Frobenius norm of matrix $A$ is denoted by $\| A \|_F$ and is defined by
    \beq \bal
        \| A \|_F := \left(\tr \left(A^* A\right)\right)^{\frac{1}{2}}.
    \eal \eeq
    Plus, the standard inner product of matrices $A$ and $B$ is denoted by $\braket{A}{B}$ and is defined by
    \beq \bal
        \braket{A}{B} := \tr (A^* B).
    \eal \eeq
\end{definition}

\begin{definition}\label{def: 2 norm}
    Let $\norm{\cdot}_{\max}$ be a norm defined on  $M_4(\mathbb{C})\times M_4(\mathbb{C})$ as follows
    \begin{equation}
        \begin{aligned}
            \norm{\cdot}_2: (A, B) \mapsto \norm{(A, B)}_2 = \sqrt{\norm{A}_F^2 + \norm{B}_F^2}.
        \end{aligned}
    \end{equation}
\end{definition}

The following lemma is demonstrated on page 255 of \cite{Horn_Johnson_1991}.

\begin{lemma}\label{lemma: krom sum and commutators}
     For all $A, B, C, D \in M_n(\bbC)$
    \beq \bal
        B D + D A^T = C \iff (A \ox I + I \ox B) \operatorname{vec} D = \operatorname{vec} C.
    \eal \eeq
\end{lemma}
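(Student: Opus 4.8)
The plan is to derive the stated equivalence from the single elementary identity
\beq \bal
\operatorname{vec}(XYZ) = (Z^{T} \ox X)\,\operatorname{vec} Y,
\eal \eeq
valid for all conformable complex matrices $X,Y,Z$, together with the obvious fact that $\operatorname{vec}\colon M_n(\bbC)\to\bbC^{n^2}$ is a linear bijection. Since the lemma is attributed to \cite{Horn_Johnson_1991}, one option is simply to quote this identity from there; for completeness I would instead give the short self-contained argument sketched below.

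First I would prove the displayed identity. Both sides are linear in $Y$, so it suffices to verify it when $Y = e_k e_\ell^{T}$ runs over the standard matrix units. In that case $XYZ = (Xe_k)(e_\ell^{T}Z)$ is the outer product of the $k$-th column of $X$ with the $\ell$-th row of $Z$, so its $j$-th column equals $z_{\ell j}(Xe_k)$, and hence $\operatorname{vec}(XYZ)$ is the vector whose $j$-th block of $n$ coordinates is $z_{\ell j}(Xe_k)$. On the other side, $\operatorname{vec}(e_k e_\ell^{T})$ is the standard basis vector of $\bbC^{n^2}$ in position $(\ell-1)n+k$, and multiplying it by $Z^{T}\ox X$ selects the corresponding column of that Kronecker product, whose $j$-th block is $(Z^{T})_{j\ell}(Xe_k) = z_{\ell j}(Xe_k)$. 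The two expressions agree, which establishes the identity.

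Next I would apply it twice, each time with one factor equal to the identity matrix: taking $(X,Y,Z) = (B,D,I)$ gives $\operatorname{vec}(BD) = (I\ox B)\operatorname{vec} D$, and taking $(X,Y,Z) = (I,D,A^{T})$ gives $\operatorname{vec}(DA^{T}) = ((A^{T})^{T}\ox I)\operatorname{vec} D = (A\ox I)\operatorname{vec} D$. Adding these and using linearity of $\operatorname{vec}$ yields
\beq \bal
\operatorname{vec}(BD + DA^{T}) = (A\ox I + I\ox B)\,\operatorname{vec} D.
\eal \eeq
Finally, because $\operatorname{vec}$ is injective, $BD + DA^{T} = C$ holds if and only if $\operatorname{vec}(BD+DA^{T}) = \operatorname{vec} C$, and by the last display this is exactly the condition $(A\ox I + I\ox B)\operatorname{vec} D = \operatorname{vec} C$; this proves both implications simultaneously.

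There is no real obstacle here: the only computational step is the base identity $\operatorname{vec}(XYZ)=(Z^{T}\ox X)\operatorname{vec}Y$, and even that reduces — by bilinearity and a one-line index match on rank-one matrices — to pure bookkeeping. The content of the lemma lies not in its proof but in its use: it converts the Sylvester-type matrix equation $BD+DA^{T}=C$ into a single linear system in the vector $\operatorname{vec}D$, which is what later makes the spectrum of the Kronecker sum $A\ox I + I\ox B$ (and hence commutator-type estimates) available.
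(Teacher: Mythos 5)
Your proof is correct and is precisely the standard derivation found in the cited reference (Horn and Johnson, \emph{Topics in Matrix Analysis}, p.\ 255): the paper itself does not reprove the lemma but simply cites that page, and your argument — the identity $\operatorname{vec}(XYZ)=(Z^{T}\ox X)\operatorname{vec}Y$ verified on matrix units, specialized twice and combined by linearity and injectivity of $\operatorname{vec}$ — is exactly the proof given there.
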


Some results on the upper bound of the Frobenius norm of commutators and its variants are also presented. For more details, please refer to \cite{Vong2008ProofOB} and \cite{FONG20111193}.

\begin{lemma}\label{lemma: frob norm xy-yx}
    For all $Y, Z \in M_n(\bbC)$
    \begin{equation}
        \norm{YZ - ZY}_F \leq \sqrt 2 \norm{Y}_F \norm{Z}_F, \label{xy-yx}
    \end{equation}
    \begin{equation}
        \norm{YZ - ZY^T}_F \leq \sqrt 2 \norm{Y}_F \norm{Z}_F. \label{xy-yxt}
    \end{equation}
\end{lemma}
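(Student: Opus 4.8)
Inequality \eqref{xy-yx} is the Böttcher--Wenzel inequality, and the constant $\sqrt2$ is sharp (equality holds, after a unitary conjugation, at $Y=\diag(1,-1)$ and $Z=\big(\begin{smallmatrix}0&1\\1&0\end{smallmatrix}\big)$), so no weaker constant will do. The efficient course is simply to invoke the proof of \cite{Vong2008ProofOB}. For a self-contained argument the plan is this: by homogeneity normalise $\norm{Z}_F=1$, reducing to $\norm{YZ-ZY}_F^2\le2\norm{Y}_F^2$; note that both sides are unchanged under $(Y,Z)\mapsto(UYU^*,UZU^*)$ with $U$ unitary and under $Y\mapsto Y+\lambda I$; then use that $\norm{YZ-ZY}_F^2$ attains its maximum $\mu$ on the compact set $\{\norm{Y}_F=\norm{Z}_F=1\}$, and write the first-order conditions at a maximiser $(Y_0,Z_0)$, namely $[Z_0^*,[Z_0,Y_0]]=\mu Y_0$ and $[Y_0^*,[Y_0,Z_0]]=\mu Z_0$. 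From these self-consistency relations one extracts enough structure on $(Y_0,Z_0)$ to force $\mu\le2$; this already disposes of small $n$ in the original treatment, and in general is precisely the content of \cite{Vong2008ProofOB}.

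The step I expect to be the real obstacle is exactly this final estimate, which resists naive term-by-term bounding. Expanding,
\beq\bal
\norm{YZ-ZY}_F^2=\tr\!\big((Y^*Y)(ZZ^*)\big)+\tr\!\big((Z^*Z)(YY^*)\big)-2\,\mathrm{Re}\,\tr\!\big(YZY^*Z^*\big),
\eal\eeq
each of the first two traces is $\le\norm{Y}_F^2\norm{Z}_F^2$, so were the cross term $-2\,\mathrm{Re}\,\tr(YZY^*Z^*)$ always nonpositive the bound would follow at once. It is not: in the equality configurations the cross term is strictly positive, and it is absorbed only by the slack in the bounds on the first two traces --- slack that is strict unless $Y$ and $Z$ each have a single nonzero singular value. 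A correct proof must therefore couple the three quantities $\tr((Y^*Y)(ZZ^*))$, $\tr((Z^*Z)(YY^*))$ and $\mathrm{Re}\,\tr(YZY^*Z^*)$ quantitatively; that coupling is the technical heart of the statement, and I would reproduce the argument of \cite{Vong2008ProofOB} rather than reinvent it.

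For \eqref{xy-yxt}, the transpose variant studied in \cite{FONG20111193}, the plan is to reduce to \eqref{xy-yx}. Decompose $Z=S+T$ into its symmetric part $S=\tfrac12(Z+Z^T)$ and antisymmetric part $T=\tfrac12(Z-Z^T)$. A one-line check shows $YS-SY^T$ is antisymmetric and $YT-TY^T$ is symmetric; since complex symmetric and antisymmetric matrices are Frobenius-orthogonal, this yields $\norm{YZ-ZY^T}_F^2=\norm{YS-SY^T}_F^2+\norm{YT-TY^T}_F^2$ and $\norm{Z}_F^2=\norm{S}_F^2+\norm{T}_F^2$. It then suffices to prove $\norm{YS-SY^T}_F^2\le2\norm{Y}_F^2\norm{S}_F^2$ and $\norm{YT-TY^T}_F^2\le2\norm{Y}_F^2\norm{T}_F^2$ separately. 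Since $SY^T=(YS)^T$, the first reads $\norm{M-M^T}_F^2\le2\norm{Y}_F^2\norm{S}_F^2$ with $M=YS$; since $TY^T=-(YT)^T$, the second reads $\norm{M+M^T}_F^2\le2\norm{Y}_F^2\norm{T}_F^2$ with $M=YT$. Both amount to bounding $2\norm{M}_F^2\mp2\,\mathrm{Re}\,\tr(M^*M^T)$ and are handled by the same balancing as above. (A more mechanical alternative notes that $YZ-ZY^T$ is the off-diagonal block of $\big[\diag(Y,Y^T),\,\big(\begin{smallmatrix}0&Z\\0&0\end{smallmatrix}\big)\big]\in M_{2n}(\bbC)$ and applies a singular-value-refined form of \eqref{xy-yx} to recover the constant $\sqrt2$ despite $\norm{\diag(Y,Y^T)}_F=\sqrt2\,\norm{Y}_F$.) Once \eqref{xy-yx} is available in quantitative form, \eqref{xy-yxt} follows and the lemma is complete.
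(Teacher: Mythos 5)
The paper states this lemma without proof, citing \cite{Vong2008ProofOB} for \eqref{xy-yx} (the B\"ottcher--Wenzel inequality) and \cite{FONG20111193} for \eqref{xy-yxt}. You correctly identify both references, so at the level the paper operates your approach matches, and your sketch of the variational route to \eqref{xy-yx} (normalisation, invariances, compactness, stationarity conditions) is a fair description of what the cited proof does, with the hard final estimate deferred --- which you say explicitly.

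Your sketch for \eqref{xy-yxt} contains one step that should not be mistaken for a reduction. The decomposition $Z=S+T$, the verification that $YS-SY^T$ is antisymmetric and $YT-TY^T$ symmetric, and the Frobenius-orthogonality of complex symmetric and antisymmetric matrices are all correct, and they give $\norm{YZ-ZY^T}_F^2=\norm{YS-SY^T}_F^2+\norm{YT-TY^T}_F^2$ and $\norm{Z}_F^2=\norm{S}_F^2+\norm{T}_F^2$. But the two bounds you then say it suffices to prove are \emph{not} instances of \eqref{xy-yx}; they are exactly \eqref{xy-yxt} restricted to symmetric and to antisymmetric $Z$, so the splitting rephrases the problem rather than reducing it to the first inequality. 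Recasting the symmetric case as $2\norm{M}_F^2-2\,\mathrm{Re}\,\tr(M^*M^T)\le2\norm{Y}_F^2\norm{S}_F^2$ with $M=YS$ does not make it ``the same balancing as above'': in \eqref{xy-yx} the coupling is among $\tr((Y^*Y)(ZZ^*))$, $\tr((Z^*Z)(YY^*))$ and $\mathrm{Re}\,\tr(YZY^*Z^*)$, whereas here the slack to be exploited is in the submultiplicative bound $\norm{YS}_F\le\norm{Y}_F\norm{S}_F$ together with control of $\mathrm{Re}\,\tr(S^*Y^*SY^T)$, a structurally different estimate that is precisely the content of \cite{FONG20111193}. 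You also correctly note that your $2n\times 2n$ block-commutator alternative loses a factor $\sqrt2$ because $\norm{\diag(Y,Y^T)}_F=\sqrt2\norm{Y}_F$. So the citation to \cite{FONG20111193} is doing real work and cannot be replaced by the splitting alone; the paper's bare pointer to both references is the complete content of its ``proof'', and your proposal ultimately agrees, but the intermediate outline should be read as motivation, not as a closed argument.
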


Now let us recapitulate the distillability conjecture.

\begin{conjecture}\label{conjecture: distillability}
    For matrices $A, B \in M_4(\mathbb{C})$  which satisfy
    \begin{align}
        \tr(A) = \tr(B) = 0 \label{tr=0 cond}, \\
        \|A\|_F^2 + \|B\|_F^2 = \frac{1}{4}, \label{frobenius norm condition}
    \end{align}
    we define matrix $X \in M_{16}(\mathbb{C})$
    \beq \bal
        X = A \otimes I + I \otimes B. 
    \eal \eeq
    Then we have
    \beq \bal
        \sigma_1(X)^2 + \sigma_2(X)^2 \leq \frac{1}{2},
    \eal \eeq
    where $\sigma_1(X)$ and $\sigma_2(X)$ are respectively the largest and second-largest singular values of $X$.
\end{conjecture}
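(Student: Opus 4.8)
The plan is to recast the inequality as a statement about a single linear operator on $M_4(\mathbb{C})$ and then exploit the commutator norm bounds of Lemma~\ref{lemma: frob norm xy-yx}. By Lemma~\ref{lemma: krom sum and commutators}, under the isometry $\operatorname{vec}\colon M_4(\mathbb{C})\to\mathbb{C}^{16}$ the matrix $X=A\otimes I+I\otimes B$ represents the linear map $L\colon D\mapsto BD+DA^{T}$, with $\|\operatorname{vec}D\|=\|D\|_F$. By Ky Fan's maximum principle the sum of the two largest eigenvalues of the positive semidefinite operator $L^{*}L$ equals $\max\{\|LD_1\|_F^{2}+\|LD_2\|_F^{2}\}$ over Frobenius-orthonormal pairs $D_1,D_2\in M_4(\mathbb{C})$, and $\sigma_i(X)^{2}=\lambda_i(L^{*}L)$, so Conjecture~\ref{conjecture: distillability} is equivalent to
\begin{equation}
\|BD_1+D_1A^{T}\|_F^{2}+\|BD_2+D_2A^{T}\|_F^{2}\;\le\;2\bigl(\|A\|_F^{2}+\|B\|_F^{2}\bigr)
\end{equation}
for all such $D_1,D_2$; the right side is $2\cdot\tfrac14=\tfrac12$, and since the claim is invariant under $(A,B)\mapsto(tA,tB)$ the normalization~\eqref{frobenius norm condition} is immaterial. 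Two cost-free reductions help: $X$ is unitarily equivalent to $(UAU^{*})\otimes I+I\otimes(VBV^{*})$ for any unitaries $U,V$ (conjugate by $U\otimes V$) and to $B\otimes I+I\otimes A$ (conjugate by the swap), all preserving~\eqref{tr=0 cond} and the Frobenius norms; so $A$ and $B$ play symmetric roles and may each be put into a unitary normal form. Moreover $\|X\|_F^{2}=4\|A\|_F^{2}+4\|B\|_F^{2}$ because $\langle A\otimes I,\,I\otimes B\rangle=\overline{\tr A}\,\tr B=0$, so the conjecture asserts precisely that the two largest singular values of $X$ carry at most half of its total Frobenius energy.

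Second, I would extract the commutator mechanism that already settles the structured cases. After the reduction above one may assume $B=-A^{T}$ (resp.\ $B=-A$), in which case $BD+DA^{T}=-(A^{T}D-DA^{T})$ (resp.\ $BD+DA^{T}=-(AD-DA^{T})$), and the bounds~\eqref{xy-yx} and~\eqref{xy-yxt} give $\|BD+DA^{T}\|_F\le\sqrt2\,\|A\|_F\|D\|_F$; hence $\sigma_1(X)^{2}\le 2\|A\|_F^{2}$ and even $\sigma_1(X)^{2}+\sigma_2(X)^{2}\le 4\|A\|_F^{2}=2(\|A\|_F^{2}+\|B\|_F^{2})=\tfrac12$, which is Theorem~\ref{thm: B = -A^T or -A}. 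For the block-diagonal case of Theorem~\ref{thm: 2 2 bloc-diag} one partitions $D_1,D_2$ conformally with the $2\times2$ block structure of $A$ and $B$, so that $L$ splits into $2\times2$ subproblems; each is handled directly using the normal form of a trace-zero $2\times2$ matrix, with attention to the coupling between the two diagonal blocks, which is the role of Proposition~\ref{prop: equiv prob 2 2 bloc-diag} and Lemmas~\ref{lemma: 1st case 2 2 bloc-diag}--\ref{lemma: 3rd case 2 2 bloc-diag}.

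Third, for the full conjecture I would try to reduce to diagonalizable $A,B$ and then transport the commutator estimates across a non-unitary similarity. The set of admissible pairs $(A,B)$ is compact, $(A,B)\mapsto\sigma_1(X)^{2}+\sigma_2(X)^{2}$ is continuous, and the pairs with $A$ and $B$ both having distinct eigenvalues form a dense subset, so it suffices to prove the bound there; this is exactly Theorem~\ref{thm: diag prob set}, via Lemmas~\ref{lemma: phi} and~\ref{lemma: dense}. On that dense set one writes $A=S_A\Lambda_A S_A^{-1}$, $B=S_B\Lambda_B S_B^{-1}$ (with $a_i,b_j$ the eigenvalues of $A,B$) and seeks to control by how much $\sigma_1(X)^{2}+\sigma_2(X)^{2}$ can exceed the value attained in the normal case, namely the sum of the two largest numbers among $|a_i+b_j|^{2}$, in terms of the conditioning of $S_A,S_B$ together with $\sum_i a_i=\sum_j b_j=0$. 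I expect this last step to be the genuine obstacle, and the reason the conjecture is still open: the cross term $A^{*}\otimes B+A\otimes B^{*}$ in $X^{*}X$ entangles the non-normalities of $A$ and $B$, and extremal configurations already saturate the bound — e.g.\ $B=0$ with $A$ a scaled $2\times2$ nilpotent block, or $A=B$ a trace-zero $2\times2$ matrix, both giving $\sigma_1(X)^{2}+\sigma_2(X)^{2}=\tfrac12$ — so there is no slack for a soft compactness argument to absorb, and what is needed is a sharp two-subspace estimate for $D\mapsto BD+DA^{T}$ that remains valid when both matrices are non-normal.
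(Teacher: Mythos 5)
The assigned statement is Conjecture~\ref{conjecture: distillability}, which the paper does not prove; it only establishes special cases in Theorems~\ref{thm: 2 2 bloc-diag}, \ref{thm: B = -A^T or -A}, and~\ref{thm: diag prob set}, and the general statement remains open. Your proposal correctly recognizes this and does not claim a proof, so there is no hidden gap to flag beyond the one you yourself name --- a sharp two-subspace bound for $D\mapsto BD+DA^T$ when both $A$ and $B$ are non-normal. Within that caveat your reductions match the paper closely: the vec/Ky~Fan reformulation into $\max\bigl(\|BV_1+V_1A^T\|_F^2+\|BV_2+V_2A^T\|_F^2\bigr)$ over Frobenius-orthonormal $V_1,V_2$ is exactly equation~\eqref{eq form}; the commutator-bound handling of $B$ unitarily equivalent to $-A^T$ or $-A$ via Lemma~\ref{lemma: frob norm xy-yx} is Lemmas~\ref{lemma: B = -A^T} and~\ref{lemma: B = -A}; the compactness/density reduction to distinct eigenvalues is Theorem~\ref{thm: diag prob set} via Lemmas~\ref{lemma: phi} and~\ref{lemma: dense}; and your conformal partition of $D$ into $2\times2$ blocks is the same decomposition the paper realizes as a permutation similarity yielding the diagonal blocks $Y_{11},\dots,Y_{44}$ in Theorem~\ref{thm: 2 2 bloc-diag}, except that you omit the nontrivial three-way case analysis (Lemmas~\ref{lemma: 1st case 2 2 bloc-diag}--\ref{lemma: 3rd case 2 2 bloc-diag}) on which blocks carry the two largest singular values, which is where most of the work of that theorem sits. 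Your computation $\|X\|_F^2=4\|A\|_F^2+4\|B\|_F^2=1$ and your saturating examples (e.g.\ $B=0$ with $A$ a scaled nilpotent $2\times2$ block embedded in $M_4(\mathbb{C})$, or $A=B$ supported on a trace-zero $2\times2$ block) are correct and usefully explain why the soft density argument of Theorem~\ref{thm: diag prob set} can only relocate the difficulty rather than resolve it.
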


There has already been notable progress on the distillability conjecture. In \cite{Shen2018On} and \cite{2021Proving}, the case of monomial matrices $A$ and $B$ was proven. In \cite{2023On}, the case where $A$ and $B$ are matrices with at most four nonzero entries was proven, and in \cite{QIAN2021139}, the case stated in Lemma \ref{lemma: normal A or B} was proven.
\begin{lemma}\label{lemma: normal A or B}
    Conjecture \ref{conjecture: distillability} holds when one of $A$ and $B$ is normal.
\end{lemma}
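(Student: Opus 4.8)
The plan is to reduce the case of normal $A$ (the case of normal $B$ is symmetric under the natural exchange) to a concrete optimization over singular values. First I would use the fact that a normal matrix is unitarily diagonalizable: write $A = U D U^*$ with $D = \diag(a_1,a_2,a_3,a_4)$ and $\sum_i a_i = 0$. Since the singular values of $X = A\ox I + I\ox B$ are invariant under $X \mapsto (U^*\ox V^*) X (U \ox V)$ for any unitaries $U,V$, I may assume $A = D$ is diagonal. Then $X = D \ox I + I \ox B$ is block diagonal with $4$ blocks $a_i I_4 + B$, $i=1,\dots,4$, so $\s_1(X)^2 + \s_2(X)^2$ is obtained from the multiset of squared singular values of the four matrices $B + a_i I$. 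The key point is that by Lemma \ref{lemma: krom sum and commutators} the singular values of $B + a_i I$ are governed by how $B$ interacts with the scalar shift, and since $\|B\|_F^2 \le \tfrac14$ and $\tr B = 0$, one gets good control.

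The central estimate I would aim for is: for each $i$,
\beq \bal
    \s_1(B + a_i I)^2 + \s_2(B + a_i I)^2 \le \|B\|_F^2 + 2|a_i|^2 + (\text{cross terms}),
\eal \eeq
and then bound the cross terms using $\tr B = 0$, which forces $\re(\bar a_i \tr B) = 0$ and kills the first-order contribution of the shift in the Frobenius norm $\|B + a_i I\|_F^2 = \|B\|_F^2 + 4|a_i|^2$. Because $\s_1^2 + \s_2^2 \le \|B+a_iI\|_F^2$ always holds (sum of the two largest squared singular values is at most the sum of all of them), and $\sum_i |a_i|^2 = \|A\|_F^2$, summing or taking the maximum over $i$ together with $\|A\|_F^2 + \|B\|_F^2 = \tfrac14$ gives a bound of the form $\tfrac14 + 3|a_i|^2$ or similar — which is not yet $\le \tfrac12$ without more work. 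So the crude Frobenius bound is not enough by itself, and one must extract the two \emph{largest} singular values more carefully, presumably by noting that $\s_1(B+a_iI) \le \s_1(B) + |a_i|$ (triangle inequality for the operator norm) and combining this with $\sum_i \s_1(B+a_iI)^2 \le \|B\|_F^2 + \cdots$. I would also invoke Lemma \ref{lemma: frob norm xy-yx} if a commutator $[B,B^*]$ term appears when relating $\s_1(X)^2+\s_2(X)^2$ to Frobenius norms of $B$ and $A$.

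Concretely, the steps in order are: (1) normalize $A$ to be diagonal by a unitary conjugation that preserves all singular values of $X$; (2) decompose $X$ into the four diagonal blocks $B + a_i I$; (3) establish the per-block inequality for $\s_1^2 + \s_2^2$ in terms of $\|B\|_F^2$, $|a_i|$, and possibly $\s_1(B)$, crucially using $\tr B = 0$ to remove linear-in-$a_i$ terms; (4) take the maximum of the two largest block-eigenvalues over all four blocks — here one uses that the two globally largest singular values of $X$ either come from a single block $B+a_iI$ or are split between two blocks, and in the split case each contributes only its own $\s_1$; (5) combine with the constraint \eqref{frobenius norm condition} to conclude $\le \tfrac12$. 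The main obstacle is step (3)–(4): the naive "sum of all squared singular values" bound gives only $\tfrac14$ plus an error, and one needs the finer observation that selecting just the two largest singular values, spread across at most two of the four $2\times2$-free blocks, saves exactly the factor needed; making this selection argument rigorous — in particular handling the case where both large singular values sit in the same block $B+a_iI$ with $|a_i|$ large — is where the real content lies, and it is presumably where the hypothesis that $B$ (not $A$) absorbs the rest of the Frobenius budget is used.
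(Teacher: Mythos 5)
First, note that the paper does not actually prove Lemma \ref{lemma: normal A or B}: it cites the result from \cite{QIAN2021139} and only sketches the strategy, namely that normality of $A$ (or $B$) lets one unitarily conjugate $X$ into a block-diagonal matrix with four $4\times 4$ blocks of the form $a_iI + B$. Your opening steps---diagonalizing the normal factor and identifying the blocks $B + a_iI$---agree with that sketch.

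However, what you have written is a proof \emph{plan}, not a proof, and the gap you yourself flag in steps (3)--(4) is genuine and is not filled anywhere in the proposal. As you observe, the crude per-block bound
\beq \bal
\s_1(B+a_iI)^2 + \s_2(B+a_iI)^2 \;\leq\; \norm{B+a_iI}_F^2 \;=\; 4|a_i|^2 + \norm{B}_F^2
\eal \eeq
(using $\tr B = 0$) can exceed $\tfrac12$: the traceless constraint on $A$ only gives $|a_i|^2 \leq \tfrac34\norm{A}_F^2$, so the right-hand side can reach $3\norm{A}_F^2 + \norm{B}_F^2$, which equals $\tfrac34$ when $\norm{A}_F^2 = \tfrac14$. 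Your proposed remedies---the operator-norm triangle inequality $\s_1(B+a_iI) \leq \s_1(B) + |a_i|$, or ``invoking Lemma \ref{lemma: frob norm xy-yx} if a commutator appears''---are left speculative and are never combined into a closed estimate; nothing in the write-up actually extracts a $\tfrac12$ rather than a $\tfrac34$, nor does it handle the two cases (both leading singular values in one block, or split across two blocks) with a concrete inequality in either case. Also, the closing remark about a ``hypothesis that $B$ (not $A$) absorbs the rest of the Frobenius budget'' is not a hypothesis of the lemma; $A$ and $B$ enter the constraint \eqref{frobenius norm condition} symmetrically. Until the refined per-block or cross-block bound is actually established, the argument is not a proof of the statement.
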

The proof of this lemma leverages the property that a normal matrix $A$ or $B$ enables the eigenspace of matrix $X$ to be decomposed into four mutually orthogonal eigenspaces, rendering matrix $X$ unitarily equivalent to a block-diagonal matrix with four $4$ by $4$ blocks on the diagonal. This approach inspired us to investigate the case where both $A$ and $B$ are unitarily equivalent to block-diagonal matrices with two $2$ by $2$ blocks on the diagonal. Under this condition, matrix $X$ is also unitarily equivalent to a block-diagonal matrix with four $4$ by $4$ blocks on the diagonal.

\section{Main results}\label{sec:res}

In this section, we present the main results. In Sec. \ref{sec:UEtoBD2x2}, we show that Conjecture \ref{conjecture: distillability} holds when $A$ and $B$ are unitarily equivalent to block-diagonal matrices, each with two $2$ by $2$ blocks. In Sec. \ref{sec:UE=-ATor-A}, we show that Conjecture \ref{conjecture: distillability} holds when matrix $B$ is unitarily equivalent to either $-A^T$ or $-A$. In Sec. \ref{subsec: diag}, we show that Conjecture \ref{conjecture: distillability} holds if and only if it holds for matrices with distinct eigenvalues. 

\subsection{$A$ and $B$ are unitarily equivalent to block-diagonal matrices with two $2$ by $2$ blocks}\label{sec:UEtoBD2x2}

In this subsection, we shall assume that matrices $A$ and $B$ are block-diagonal matrices with two $2$ by $2$ blocks. The main conclusion of this subsection is Theorem \ref{thm: 2 2 bloc-diag}, which is proven using Propositions \ref{prop: equiv prob 2 2 bloc-diag} and Lemmas \ref{lemma: 1st case 2 2 bloc-diag} to \ref{lemma: 3rd case 2 2 bloc-diag}. We start by simplifying the statement of Theorem \ref{thm: 2 2 bloc-diag} with the help of Proposition \ref{prop: equiv prob 2 2 bloc-diag}, \eqref{tr=0 cond}, and \eqref{frobenius norm condition}, resulting in three distinct cases, \ref{1st case: 2 2 bloc-diag}, \ref{2nd case: 2 2 bloc-diag}, and \ref{3rd case: 2 2 bloc-diag}. We then perform a case-by-case analysis of these three cases, simplifying each one through analysis of entries of  $A$ and $B$. In Lemmas \ref{lemma: 1st case 2 2 bloc-diag}, \ref{lemma: 2nd case 2 2 bloc-diag}, and \ref{lemma: 3rd case 2 2 bloc-diag}, we prove the validity of Conjecture \ref{conjecture: distillability} corresponding to each case. Consequently, Theorem \ref{thm: 2 2 bloc-diag} is established.

\begin{theorem}\label{thm: 2 2 bloc-diag}
    Conjecture \ref{conjecture: distillability} holds for matrices $A$ and $B$ that are unitarily equivalent to block-diagonal matrices with two $2$ by $2$ blocks.
\end{theorem}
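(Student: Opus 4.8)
The plan is to reduce the block-diagonal case to a finite set of scalar optimization problems by exploiting the commuting structure of the two-block direct sums. Since unitary equivalence preserves singular values, the trace, and the Frobenius norm (all the quantities appearing in Conjecture~\ref{conjecture: distillability}), we may assume without loss of generality that $A = A_1 \oplus A_2$ and $B = B_1 \oplus B_2$ with each $A_i, B_i \in M_2(\bbC)$. The first step is to record how $X = A \otimes I + I \otimes B$ decomposes: with respect to the tensor-product block structure indexed by the blocks of $A$ and the blocks of $B$, $X$ is unitarily equivalent (after a permutation of basis vectors) to the $4 \times 4$-block-diagonal matrix $\bigoplus_{i,j=1}^{2} \left( A_i \otimes I_2 + I_2 \otimes B_j \right)$. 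This is the content of Proposition~\ref{prop: equiv prob 2 2 bloc-diag}. Consequently $\{\sigma_k(X)\}$ is the union (with multiplicity) of the singular-value lists of the four $4\times 4$ Kronecker-sum blocks $X_{ij} := A_i \otimes I_2 + I_2 \otimes B_j$, and the conjecture becomes: under $\tr(A_1)+\tr(A_2) = \tr(B_1)+\tr(B_2) = 0$ and $\sum_i \|A_i\|_F^2 + \sum_j \|B_j\|_F^2 = \tfrac14$, the two largest numbers among all the $\sigma_k(X_{ij})$ have squares summing to at most $\tfrac12$.

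**Next** I would normalize each $2\times 2$ block. A $2\times 2$ matrix $A_i$ can be written as $\tfrac12(\tr A_i) I_2 + \widetilde{A_i}$ with $\widetilde{A_i}$ traceless; the scalar part of $A_i$ (resp.\ $B_j$) merely shifts $X_{ij}$ by a multiple of $I_4$, so $\sigma_1(X_{ij})$ is controlled by $|\tfrac12\tr A_i + \tfrac12\tr B_j|$ plus the operator norm of the traceless Kronecker sum $\widetilde{A_i}\otimes I_2 + I_2 \otimes \widetilde{B_j}$. For a traceless $2\times2$ matrix $M$, the singular values of $M\otimes I_2 + I_2\otimes N$ can be written down explicitly in terms of $\|M\|_F, \|N\|_F$ and the Hermitian/anti-Hermitian interplay between $M$ and $N$; in particular the relevant bound reduces to something of the shape $\sigma_1(X_{ij})^2 \le$ (an explicit quadratic in the block norms). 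The trace constraint couples $\tr A_1$ with $\tr A_2$ and $\tr B_1$ with $\tr B_2$, which is exactly why the problem splits into the three cases \ref{1st case: 2 2 bloc-diag}, \ref{2nd case: 2 2 bloc-diag}, \ref{3rd case: 2 2 bloc-diag} — according to whether the two relevant traces are zero, equal and opposite, or in some intermediate configuration — so that in each case one of the four blocks $X_{ij}$ is identified as the one whose two largest singular values dominate, and the remaining blocks are bounded against it using the normalization $\sum \|A_i\|_F^2 + \sum\|B_j\|_F^2 = \tfrac14$. Lemmas~\ref{lemma: 1st case 2 2 bloc-diag}, \ref{lemma: 2nd case 2 2 bloc-diag}, \ref{lemma: 3rd case 2 2 bloc-diag} dispatch these three cases.

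**The main obstacle** I anticipate is the bookkeeping in the case analysis: $\sigma_1(X)^2 + \sigma_2(X)^2$ is not a smooth function of the data (it can pick the two largest values from one block, or one each from two different blocks), so one must carefully argue which pair is extremal before optimizing. The cleanest route is probably to bound $\sigma_1(X)^2 + \sigma_2(X)^2 \le \max_{i,j}\big(\sigma_1(X_{ij})^2 + \sigma_2(X_{ij})^2\big) + \max_{(i,j)\ne(i',j')} \big(\sigma_1(X_{ij})^2 + \sigma_1(X_{i'j'})^2 - \text{correction}\big)$ — but in practice it is likely simpler to note that within a single $4\times 4$ block $X_{ij}$, the two largest singular values already satisfy the target bound with room to spare (because the total norm budget $\tfrac14$ is split four ways), and the only dangerous configuration is when mass concentrates so that two blocks are simultaneously large, which the trace constraint $\tr A_1 = -\tr A_2$ (and similarly for $B$) prevents by forcing a symmetry between the "large" and "small" blocks. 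So the key inequality in each lemma will be a two- or three-variable polynomial inequality obtained after eliminating the constraints by Lagrange multipliers or by direct substitution, and verifying these is routine once the right block has been singled out.

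**To summarize the proof strategy:** apply Proposition~\ref{prop: equiv prob 2 2 bloc-diag} to reduce to the four Kronecker-sum blocks $X_{ij}$; split into the three cases according to the trace pairing; in each case identify the dominant block and bound the others using \eqref{tr=0 cond}, \eqref{frobenius norm condition}, and the $2\times 2$ singular-value formulas; conclude via Lemmas~\ref{lemma: 1st case 2 2 bloc-diag}--\ref{lemma: 3rd case 2 2 bloc-diag}. $\qed$
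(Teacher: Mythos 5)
Your high-level reduction is correct and matches the paper: use Proposition~\ref{prop: equiv prob 2 2 bloc-diag} to pass to genuine block-diagonal $A,B$, observe that $X$ is permutation-equivalent to $\bigoplus_{i,j}\bigl(A_i\otimes I_2 + I_2\otimes B_j\bigr)$, and split into three cases handled by Lemmas~\ref{lemma: 1st case 2 2 bloc-diag}--\ref{lemma: 3rd case 2 2 bloc-diag}. However, you have misidentified what the three cases \emph{are}. They are not indexed by configurations of the traces (``zero, equal and opposite, or intermediate''). They are indexed by \emph{which of the four $4\times 4$ blocks contain the two largest singular values of $X$}: (i) both come from the same block $Y_{ij}$; (ii) they come from two different blocks sharing the same $A$-block or the same $B$-block, e.g.\ $Y_{11}$ and $Y_{22}$; (iii) they come from two blocks with distinct $A$-blocks and distinct $B$-blocks, e.g.\ $Y_{11}$ and $Y_{44}$. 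The traceless condition \eqref{tr=0 cond} is used only for the preliminary normalization that makes the diagonal entries of the $2\times 2$ blocks equal (so $a = a_{11}=a_{22}=-a_{33}=-a_{44}$ and similarly for $B$); it is not what drives the case split.

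More substantively, your plan for proving the bound in each case is too vague to count as a proof, and the technique you gesture at (Lagrange multipliers / direct substitution into a polynomial inequality) is not what the paper does. The paper's actual arguments are: in case~(i), a scaling argument shows the extremum occurs when the entries not appearing in the dominant block vanish, after which $\sigma_1^2+\sigma_2^2 \le \|Y_{11}\|_F^2$ and the global Frobenius budget \eqref{frobenius norm condition} finish it. In case~(ii), a scaling argument again kills the irrelevant off-diagonal entries, and then one uses the fact that for a positive semidefinite $2\times 2$-block matrix, $\lambda_{\max}$ is at most the sum of the $\lambda_{\max}$ of its diagonal blocks, applied to $Y_{11}Y_{11}^*$ and $Y_{22}Y_{22}^*$; the resulting explicit $2\times2$ eigenvalue expressions are then bounded by hand. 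In case~(iii), scaling forces $a=b$, a Perron--Frobenius-type argument lets one take all entries of $Y_{11}$ nonnegative and of $Y_{44}$ nonpositive, and then one splits each block into its diagonal and off-diagonal parts and applies the operator-norm triangle inequality. None of these steps is implied by the material in your proposal, so while the skeleton is right, the flesh is missing: you would still need to discover all three mechanisms to complete the argument.
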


The proof of Theorem \ref{thm: 2 2 bloc-diag} starts from here, the following proposition is obvious from the fact that singular values are unitarily invariant.

\begin{proposition}\label{prop: equiv prob 2 2 bloc-diag}
    If Conjecture \ref{conjecture: distillability} holds for matrices $A$ and $B$, then Conjecture \ref{conjecture: distillability} holds for matrices $\tilde A$ and $\tilde B$ that are unitarily equivalent to $A$ and $B$.
\end{proposition}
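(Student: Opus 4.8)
The plan is entirely elementary and rests on two standard facts: the trace and the Frobenius norm (equivalently, the multiset of singular values) are invariant under conjugation by unitaries, and the Kronecker product is multiplicative. Write the hypothesis of the proposition as $\tilde A = U A U^\dg$ and $\tilde B = V B V^\dg$ for some unitaries $U, V \in M_4(\bbC)$. Since Conjecture \ref{conjecture: distillability} is a statement about pairs $(A,B)$ that \emph{first} satisfy the normalizations \eqref{tr=0 cond} and \eqref{frobenius norm condition}, the initial step is to check that these two hypotheses are inherited by $(\tilde A, \tilde B)$, and only then to pass to the conclusion on singular values.

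For the hypotheses: conjugation by a unitary preserves the trace, so $\tr \tilde A = \tr(U A U^\dg) = \tr A = 0$ and likewise $\tr \tilde B = 0$, which gives \eqref{tr=0 cond}. Similarly $\norm{\tilde A}_F^2 = \tr(U A^* U^\dg U A U^\dg) = \tr(A^* A) = \norm{A}_F^2$, and the same for $\tilde B$, so $\norm{\tilde A}_F^2 + \norm{\tilde B}_F^2 = \norm{A}_F^2 + \norm{B}_F^2 = \tfrac14$, which gives \eqref{frobenius norm condition}. Hence $(\tilde A, \tilde B)$ is an admissible pair for the conjecture.

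For the conclusion: set $X = A\ox I + I\ox B$ and $\tilde X = \tilde A \ox I + I \ox \tilde B$. Writing $I = V I V^\dg = U I U^\dg$ and applying the mixed-product property of the Kronecker product,
\[
\tilde X = (U\ox V)(A\ox I)(U\ox V)^\dg + (U\ox V)(I\ox B)(U\ox V)^\dg = (U\ox V)\,X\,(U\ox V)^\dg .
\]
Since $U$ and $V$ are unitary, $U\ox V$ is unitary, so $\tilde X$ and $X$ are unitarily equivalent and therefore share the same ordered singular values, $\sigma_i(\tilde X) = \sigma_i(X)$ for every $i$. In particular $\sigma_1(\tilde X)^2 + \sigma_2(\tilde X)^2 = \sigma_1(X)^2 + \sigma_2(X)^2 \le \tfrac12$, the last inequality being exactly the assumed validity of Conjecture \ref{conjecture: distillability} for $(A,B)$. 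This closes the argument. There is no genuine obstacle here; the only point that deserves a moment's care is that an admissible pair must satisfy \emph{both} normalization conditions, so both must be verified for $(\tilde A,\tilde B)$ before the singular-value bound can be invoked.
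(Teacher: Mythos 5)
Your proof is correct and follows exactly the route the paper implicitly relies on when it calls the proposition ``obvious from the fact that singular values are unitarily invariant'': you simply make explicit that both normalizations are preserved under unitary conjugation and that $\tilde X = (U\ox V)X(U\ox V)^\dg$, so the singular-value bound transfers. No discrepancy with the paper's argument.
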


Without loss of generality, we can assume that $A$ and $B$ are block-diagonal matrices with two $2$ by $2$ blocks given by
\beq \bal
    A = 
    \begin{bmatrix}
        A_{11} & 0 \\
        0 & A_{22}
    \end{bmatrix}
    =
    \begin{bmatrix}
        a_{11} & a_{12} & 0 & 0 \\
        a_{21} & a_{22} & 0 & 0 \\
        0 & 0 & a_{33} & a_{34} \\
        0 & 0 & a_{43} & a_{44}
    \end{bmatrix},
    \quad
    B = 
    \begin{bmatrix}
        B_{11} & 0 \\
        0 & B_{22}
    \end{bmatrix}
    =
    \begin{bmatrix}
        b_{11} & b_{12} & 0 & 0 \\
        b_{21} & b_{22} & 0 & 0 \\
        0 & 0 & b_{33} & b_{34} \\
        0 & 0 & b_{43} & b_{44}
    \end{bmatrix}.
\eal \eeq
In order to take advantage of the traceless condition \eqref{tr=0 cond}, we can always obtain $A_{11}, A_{22}, B_{11}$, and $B_{22}$ with equal diagonal entries by unitary equivalence (for more details, please refer to example 2.2.3 in \cite{Horn_Johnson_2012}), i.e. $a = a_{11} = a_{22} = -a_{33} = -a_{44}$ and $b = b_{11} = b_{22} = -b_{33} = -b_{44}$. By Proposition \ref{prop: equiv prob 2 2 bloc-diag}, we know that unitary equivalence does not influence the validity of Theorem \ref{thm: 2 2 bloc-diag}. Now we have
\beq \bal
    A = 
    \begin{bmatrix}
        A_{11} & 0 \\
        0 & A_{22}
    \end{bmatrix}
    =
    \begin{bmatrix}
        a & a_{12} & 0 & 0 \\
        a_{21} & a & 0 & 0 \\
        0 & 0 & -a & a_{34} \\
        0 & 0 & a_{43} & -a
    \end{bmatrix},
    \quad
    B = 
    \begin{bmatrix}
        B_{11} & 0 \\
        0 & B_{22}
    \end{bmatrix}
    =
    \begin{bmatrix}
        b & b_{12} & 0 & 0 \\
        b_{21} & b & 0 & 0 \\
        0 & 0 & -b & b_{34} \\
        0 & 0 & b_{43} & -b
    \end{bmatrix}.
\eal \eeq
By calculations, we have
\beq \bal
    X &= A \otimes I_4 + I_4 \otimes B = 
    \bma
        X_{11} & 0 \\
        0 & X_{22}
    \ema,
\eal \eeq
where
\beq \bal
    X_{11} = 
    \begin{bmatrix}
        b + a & b_{12} & 0 & 0 & a_{12} & 0 & 0 & 0 \\
        b_{21} & b + a & 0 & 0 & 0 & a_{12} & 0 & 0 \\
        0 & 0 & - b + a & b_{34} & 0 & 0 & a_{12} & 0 \\
        0 & 0 & b_{43} & - b + a & 0 & 0 & 0 & a_{12} \\
        a_{21} & 0 & 0 & 0 & b + a & b_{12} & 0 & 0 \\
        0 & a_{21} & 0 & 0 & b_{21} & b + a & 0 & 0 \\
        0 & 0 & a_{21} & 0 & 0 & 0 & - b + a & b_{34} \\
        0 & 0 & 0 & a_{21} & 0 & 0 & b_{43} & - b + a
    \end{bmatrix},
\eal \eeq
\beq \bal
    X_{22} = 
    \begin{bmatrix}
        b - a & b_{12} & 0 & 0 & a_{34} & 0 & 0 & 0 \\
        b_{21} & b - a & 0 & 0 & 0 & a_{34} & 0 & 0 \\
        0 & 0 & -b - a & b_{34} & 0 & 0 & a_{34} & 0 \\
        0 & 0 & b_{43} & -b - a & 0 & 0 & 0 & a_{34} \\
        a_{43} & 0 & 0 & 0 & b - a & b_{12} & 0 & 0 \\
        0 & a_{43} & 0 & 0 & b_{21} & b - a & 0 & 0 \\
        0 & 0 & a_{43} & 0 & 0 & 0 & - b - a & b_{34} \\
        0 & 0 & 0 & a_{43} & 0 & 0 & b_{43} & - b - a
    \end{bmatrix}.
\eal \eeq
Since singular values are unitarily invariant, we can perform a permutation equivalence
\beq \bal
    Y = P X P^T,
\eal \eeq
where
\beq \bal
    P = 
    \begin{bmatrix}
        I_2 & 0 & 0 & 0 & 0 & 0 & 0 & 0 \\
        0 & 0 & I_2 & 0 & 0 & 0 & 0 & 0 \\
        0 & I_2 & 0 & 0 & 0 & 0 & 0 & 0 \\
        0 & 0 & 0 & I_2 & 0 & 0 & 0 & 0 \\
        0 & 0 & 0 & 0 & I_2 & 0 & 0 & 0 \\
        0 & 0 & 0 & 0 & 0 & 0 & I_2 & 0 \\
        0 & 0 & 0 & 0 & 0 & I_2 & 0 & 0 \\
        0 & 0 & 0 & 0 & 0 & 0 & 0 & I_2
    \end{bmatrix}.
\eal \eeq
We have
\beq \bal
    Y = 
    \bma
        Y_{11} & 0 & 0 & 0 \\
        0 & Y_{22} & 0 & 0 \\
        0 & 0 & Y_{33} & 0 \\
        0 & 0 & 0 & Y_{44}
    \ema,
\eal \eeq
where
\beq \bal
    Y_{11} = 
    \bma
        b+a & b_{12} & a_{12} & 0 \\
        b_{21} & b+a & 0 & a_{12} \\
        a_{21} & 0 & b+a & b_{12} \\
        0 & a_{21} & b_{21} & b+a
    \ema,
\eal \eeq
\beq \bal
    Y_{22} =
    \bma
        -b+a & b_{34} & a_{12} & 0 \\
        b_{43} & -b+a & 0 & a_{12} \\
        a_{21} & 0 & -b+a & b_{34} \\
        0 & a_{21} & b_{43} & -b+a 
    \ema,
\eal \eeq
\beq \bal
    Y_{33} = 
    \bma
        b-a & b_{12} & a_{34} & 0 \\
        b_{21} & b-a & 0 & a_{34} \\
        a_{43} & 0 & b-a & b_{12} \\
        0 & a_{43} & b_{21} & b-a
    \ema,
\eal \eeq
\beq \bal
    Y_{44} =
    \bma
        -b-a & b_{34} & a_{34} & 0 \\
        b_{43} & -b-a & 0 & a_{34} \\
        a_{43} & 0 & -b-a & b_{34} \\
        0 & a_{43} & b_{43} & -b-a 
    \ema.
\eal \eeq
We observe that 
\beq \bal
    Y_{11} = A_{11} \otimes I_2 + I_2 \otimes B_{11}, \\
    Y_{22} = A_{11} \otimes I_2 + I_2 \otimes B_{22}, \\
    Y_{33} = A_{22} \otimes I_2 + I_2 \otimes B_{11}, \\
    Y_{44} = A_{22} \otimes I_2 + I_2 \otimes B_{22}.
\eal \eeq
To facilitate the following analysis, the Frobenius norms of $\{Y_{ii}\}_{i \in \llbracket 1,4\rrbracket}$ are listed below
\beq \bal
    \tr (Y_{11}^* Y_{11}) &= 4|b+a|^2 + 2(|a_{12}|^2 + |a_{21}|^2 + |b_{12}|^2 + |b_{21}|^2), \\
    \tr (Y_{22}^* Y_{22}) &= 4|b-a|^2 + 2(|a_{12}|^2 + |a_{21}|^2 + |b_{34}|^2 + |b_{43}|^2), \\
    \tr (Y_{33}^* Y_{33}) &= 4|b-a|^2 + 2(|a_{34}|^2 + |a_{43}|^2 + |b_{12}|^2 + |b_{21}|^2), \\
    \tr (Y_{44}^* Y_{44}) &= 4|b+a|^2 + 2(|a_{34}|^2 + |a_{43}|^2 + |b_{34}|^2 + |b_{43}|^2).
\eal \eeq
The Frobenius norm constraint \eqref{frobenius norm condition} over the entries of $A$ and $B$ is also presented
\beq \bal\label{frobenius norm condition ver. 2}
    \tr(A^* A + B^* B) = 4(|a|^2 + |b|^2) + |a_{12}|^2 + |a_{21}|^2 + |b_{12}|^2 + |b_{21}|^2 + |a_{34}|^2 + |a_{43}|^2 + |b_{34}|^2 + |b_{43}|^2 = \frac{1}{4}.
\eal \eeq
Plus, we provide the expressions of $\{Y_{ii} Y_{ii}^*\}_{i \in \llbracket 1, 4 \rrbracket}$
\beq \bal
    Y_{11} Y_{11}^* = 
    \bma
        |b + a|^2 + |b_{12}|^2 + |a_{12}|^2 & (b + a) \bar b_{21} + (\bar b + \bar a) b_{12} & (b + a) \bar a_{21} + (\bar b + \bar a) a_{12} & b_{12} \bar a_{21} + \bar b_{21} a_{12} \\
        (\bar b + \bar a) b_{21} + (b + a) \bar b_{12} & |b + a|^2 + |b_{21}|^2 + |a_{12}|^2 & b_{21} \bar a_{21} + \bar b_{12} a_{12} & (b + a) \bar a_{21} + (\bar b + \bar a) a_{12} \\
        (\bar b + \bar a) a_{21} + (b + a) \bar a_{12} & \bar b_{21} a_{21} + b_{12} \bar a_{12} & |b + a|^2 + |b_{12}|^2 + |a_{21}|^2 & (b + a) \bar b_{21} + (\bar b + \bar a) b_{12} \\
        \bar b_{12} a_{21} + b_{21} \bar a_{12} & (\bar b + \bar a) a_{21} + (b + a) \bar a_{12} & (\bar b + \bar a) b_{21} + (b + a) \bar b_{12} & |b + a|^2 + |b_{21}|^2 + |a_{21}|^2
    \ema,
\eal \eeq
\beq \bal
    Y_{22} Y_{22}^* = 
    \bma
        |a - b|^2 + |b_{34}|^2 + |a_{12}|^2 & (a - b) \bar b_{43} + (\bar a - \bar b) b_{34} & (a - b) \bar a_{21} + (\bar a - \bar b) a_{12} & b_{34} \bar a_{21} + \bar b_{43} a_{12} \\
        (\bar a - \bar b) b_{43} + (a - b) \bar b_{34} & |a - b|^2 + |b_{43}|^2 + |a_{12}|^2 & b_{43} \bar a_{21} + \bar b_{34} a_{12} & (a - b) \bar a_{21} + (\bar a - \bar b) a_{12} \\
        (\bar a - \bar b) a_{21} + (a - b) \bar a_{12} & \bar b_{43} a_{21} + b_{34} \bar a_{12} & |a - b|^2 + |b_{34}|^2 + |a_{21}|^2 & (a - b) \bar b_{43} + (\bar a - \bar b) b_{34} \\
        \bar b_{34} a_{21} + b_{43} \bar a_{12} & (\bar a - \bar b) a_{21} + (a - b) \bar a_{12} & (\bar a - \bar b) b_{43} + (a - b) \bar b_{34} & |a - b|^2 + |b_{43}|^2 + |a_{21}|^2
    \ema,
\eal \eeq
\beq \bal
    Y_{33} Y_{33}^* = 
    \bma
        |b - a|^2 + |b_{12}|^2 + |a_{34}|^2 & (b - a) \bar b_{21} + (\bar b - \bar a) b_{12} & (b - a) \bar a_{43} + (\bar b - \bar a) a_{34} & b_{12} \bar a_{43} + \bar b_{21} a_{34} \\
        (\bar b - \bar a) b_{21} + (b - a) \bar b_{12} & |b - a|^2 + |b_{21}|^2 + |a_{34}|^2 & b_{21} \bar a_{43} + \bar b_{12} a_{34} & (b - a) \bar a_{43} + (\bar b - \bar a) a_{34} \\
        (\bar b - \bar a) a_{43} + (b - a) \bar a_{34} & \bar b_{21} a_{43} + b_{12} \bar a_{34} & |b - a|^2 + |b_{12}|^2 + |a_{43}|^2 & (b - a) \bar b_{21} + (\bar b - \bar a) b_{12} \\
        \bar b_{12} a_{43} + b_{21} \bar a_{34} & (\bar b - \bar a) a_{43} + (b - a) \bar a_{34} & (\bar b - \bar a) b_{21} + (b - a) \bar b_{12} & |b - a|^2 + |b_{21}|^2 + |a_{43}|^2
    \ema,
\eal \eeq
\beq \bal
    Y_{44} Y_{44}^* = 
    \left[
        \begin{array}{cccc}
            |b + a|^2 + |b_{34}|^2 + |a_{34}|^2 & -(b + a) \bar b_{43} - (\bar b + \bar a) b_{34} & -(b + a) \bar a_{43} - (\bar b + \bar a) a_{34} & b_{34} \bar a_{43} + \bar b_{43} a_{34} \\
            -(\bar b + \bar a) b_{43} - (b + a) \bar b_{34} & |b + a|^2 + |b_{43}|^2 + |a_{34}|^2 & b_{43} \bar a_{43} + \bar b_{34} a_{34} & -(b + a) \bar a_{43} - (\bar b + \bar a) a_{34} \\
            -(\bar b + \bar a) a_{43} - (b + a) \bar a_{34} & \bar b_{43} a_{43} + b_{34} \bar a_{34} & |b + a|^2 + |b_{34}|^2 + |a_{43}|^2 & -(b + a) \bar b_{43} - (\bar b + \bar a) b_{34} \\
            \bar b_{34} a_{43} + b_{43} \bar a_{34} & -(\bar b + \bar a) a_{43} - (b + a) \bar a_{34} & -(\bar b + \bar a) b_{43} - (b + a) \bar b_{34} & |b + a|^2 + |b_{43}|^2 + |a_{43}|^2
        \end{array}
    \right].
\eal \eeq
There are three different cases we have to consider
\begin{enumerate}[label = (\roman*)]
    \item The two largest singular values are located in one block of $Y$. Without loss of generality, we can assume that these two singular values are the two largest singular values of $Y_{11}$, \label{1st case: 2 2 bloc-diag}
    \item The two largest singular values are found in two different blocks of $Y$, which are Kronecker sums of one block from $A$ (or respectively $B$) and two blocks from $B$ (or respectively $A$). We can assume that $\sigma_1(Y)$ is the largest singular value of $Y_{11}$, and $\sigma_2(Y)$ is the largest singular value of $Y_{22}$, \label{2nd case: 2 2 bloc-diag}
    \item The two largest singular values are located respectively in two different blocks of $Y$, which are Kronecker sums of two blocks from $A$ and two blocks from $B$. We can assume that $\sigma_1(Y)$ is the largest singular value of $Y_{11}$, and $\sigma_2(Y)$ is the largest singular value of $Y_{44}$. \label{3rd case: 2 2 bloc-diag}
\end{enumerate}

In the following passage, we will perform a case-by-case analysis to demonstrate the validity of distillability conjecture in all 3 cases.

\vspace{7pt}
\textbf{Initially, let us examine case \ref{1st case: 2 2 bloc-diag}.}

\begin{lemma}\label{lemma: 1st case 2 2 bloc-diag}
    In case \ref{1st case: 2 2 bloc-diag}, $\sigma_1(Y)^2 + \sigma_2(Y)^2 \leq \frac{1}{2}$.
\end{lemma}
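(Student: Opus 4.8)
The plan is to reduce the bound to a single Frobenius-norm estimate on the block $Y_{11}$. Singular values are invariant under the unitary and permutation equivalences relating the matrix $X$ of Conjecture \ref{conjecture: distillability} to the block-diagonal matrix $Y=\mathrm{diag}(Y_{11},Y_{22},Y_{33},Y_{44})$, so by the hypothesis defining case \ref{1st case: 2 2 bloc-diag} we may take $\sigma_1(Y)=\sigma_1(Y_{11})$ and $\sigma_2(Y)=\sigma_2(Y_{11})$, and it suffices to show $\sigma_1(Y_{11})^2+\sigma_2(Y_{11})^2\le\frac12$.

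Since $Y_{11}\in M_4(\bbC)$ has exactly four singular values, the first step is the elementary bound
\beq \bal
 \sigma_1(Y_{11})^2+\sigma_2(Y_{11})^2\ \le\ \sum_{i=1}^{4}\sigma_i(Y_{11})^2\ =\ \|Y_{11}\|_F^2\ =\ \tr(Y_{11}^*Y_{11}),
\eal \eeq
so the lemma follows once $\tr(Y_{11}^*Y_{11})\le\frac12$. I would then take the value $\tr(Y_{11}^*Y_{11})=4|a+b|^2+2(|a_{12}|^2+|a_{21}|^2+|b_{12}|^2+|b_{21}|^2)$ recorded above, use \eqref{frobenius norm condition ver. 2} to substitute $|a_{12}|^2+|a_{21}|^2+|b_{12}|^2+|b_{21}|^2=\frac14-4(|a|^2+|b|^2)-(|a_{34}|^2+|a_{43}|^2+|b_{34}|^2+|b_{43}|^2)$, and invoke the identity $4|a+b|^2-8(|a|^2+|b|^2)=-4|a-b|^2$ to obtain
\beq \bal
 \tr(Y_{11}^*Y_{11})=\tfrac12-4|a-b|^2-2\bigl(|a_{34}|^2+|a_{43}|^2+|b_{34}|^2+|b_{43}|^2\bigr)\ \le\ \tfrac12 .
\eal \eeq
Combining the two displays proves the lemma.

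The main point to notice — rather than an obstacle — is that in case \ref{1st case: 2 2 bloc-diag} the crude estimate ``sum of the two largest singular values squared $\le$ Frobenius norm squared of the single block in which they both sit'' is already sharp enough, because the Frobenius-norm budget \eqref{frobenius norm condition ver. 2} caps $\|Y_{11}\|_F^2$ at $\frac12$. I expect this shortcut to fail in cases \ref{2nd case: 2 2 bloc-diag} and \ref{3rd case: 2 2 bloc-diag}, where $\sigma_1(Y)$ and $\sigma_2(Y)$ lie in two different blocks and the sum of the two corresponding block norms need not be $\le\frac12$; there one must instead bound the \emph{largest} singular value of each block separately, presumably working directly with the explicit matrices $Y_{ii}Y_{ii}^*$ listed above (and perhaps the commutator estimates of Lemma \ref{lemma: frob norm xy-yx}).
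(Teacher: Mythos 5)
Your proof is correct and its core idea --- bound $\sigma_1(Y_{11})^2+\sigma_2(Y_{11})^2$ by $\|Y_{11}\|_F^2$ and spend the Frobenius-norm budget \eqref{frobenius norm condition ver. 2} --- is the same as the paper's. Your closing remark that this crude estimate fails in cases \ref{2nd case: 2 2 bloc-diag} and \ref{3rd case: 2 2 bloc-diag} is also exactly what happens in the paper. The one worthwhile difference: the paper first runs a scaling/extremality argument to reduce to $a_{34}=a_{43}=b_{34}=b_{43}=0$ and only then applies the Frobenius estimate, while you substitute the constraint directly and obtain the sharper identity
\begin{equation}
\tr(Y_{11}^*Y_{11})=\tfrac12-4|a-b|^2-2\bigl(|a_{34}|^2+|a_{43}|^2+|b_{34}|^2+|b_{43}|^2\bigr)\le\tfrac12,
\end{equation}
with no reduction at all. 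The paper's preliminary extremality step is thus logically superfluous in this case, and your streamlined version is cleaner and makes explicit that the slack in the inequality is $4|a-b|^2$ plus the off-block mass.
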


\begin{proof}
    Suppose that $\sigma_1(Y_{11})^2 + \sigma_2(Y_{11})^2$ achieve its maximum when $\norm{A_{11}}_F^2 + \norm{B_{11}}_F^2 + 2(|a|^2 + |b|^2) = c_1 > 0$, and $|a_{34}|^2 + |a_{43}|^2 + |b_{34}|^2 + |b_{43}|^2 = c_2 > 0$. By scaling the entries of $A_{11}$ and $B_{11}$ by a factor of $\sqrt{\frac{c_2 + c_1}{c_1}}$ and setting $a_{34} = a_{43} = b_{34} = b_{43} = 0$, i.e. multiplying $Y_{11}$ by $\sqrt{\frac{c_2 + c_1}{c_1}}$, the singular values of $Y_{11}$ are multiplied by $\sqrt{\frac{c_2 + c_1}{c_1}}$ while preserving the Frobenius norm constraint \eqref{frobenius norm condition}. This contradicts the premise that $\sigma_1(Y_{11})^2 + \sigma_2(Y_{11})^2$ is maximized when $|a_{34}|^2 + |a_{43}|^2 + |b_{34}|^2 + |b_{43}|^2 = c_2 > 0$. Therefore, $\sigma_1(Y_{11})^2 + \sigma_2(Y_{11})^2$ is maximized when $a_{34} = a_{43} = b_{34} = b_{43} = 0$.
    \beq \bal
        \sigma_1(Y)^2 + \sigma_2(Y)^2 &= \sigma_1(Y_{11})^2 + \sigma_2(Y_{11})^2 \leq \norm{Y_{11}}_F^2 = \norm{Y}_F^2 - \norm{Y_{22}}_F^2 - \norm{Y_{33}}_F^2 - \norm{Y_{44}}_F^2 \\
        &= 1 - 8|a-b|^2 - 4|a+b|^2 - 2(|a_{12}|^2 + |a_{21}|^2 + |b_{12}|^2 + |b_{21}|^2) \\
        &= 1 - (4|a-b|^2 + 8|a|^2 + 8|b|^2 + 2(|a_{12}|^2 + |a_{21}|^2 + |b_{12}|^2 + |b_{21}|^2)).
    \eal \eeq
    Recall that $\norm{A}_F^2 + \norm{B}_F^2 = 4|a|^2 + 4|b|^2 + |a_{12}|^2 + |a_{21}|^2 + |b_{12}|^2 + |b_{21}|^2 = \frac{1}{4}$, we have
    \beq \bal
        \sigma_1(Y)^2 + \sigma_2(Y)^2 &= \frac{1}{2} - 4|a-b|^2 \leq \frac{1}{2}.
    \eal \eeq
\end{proof}

\textbf{Subsequently, we examine case \ref{2nd case: 2 2 bloc-diag}.}

\begin{lemma}\label{lemma: 2nd case 2 2 bloc-diag}
    In case \ref{2nd case: 2 2 bloc-diag}, $\sigma_1(Y)^2 + \sigma_2(Y)^2 \leq \frac{1}{2}$.
\end{lemma}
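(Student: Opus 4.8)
The plan is to carry the argument of Lemma \ref{lemma: 1st case 2 2 bloc-diag} one block further. The entries $a_{34},a_{43}$ occur in $Y_{33}$ and $Y_{44}$ but in neither $Y_{11}$ nor $Y_{22}$, so the same scaling argument (if at the maximum some of $a_{34},a_{43}$ were nonzero, set them to $0$ and rescale the entries appearing in $Y_{11},Y_{22}$ to use up the freed budget, multiplying $\sigma_1(Y_{11})^2+\sigma_1(Y_{22})^2$ by a factor $>1$) shows that $\sigma_1(Y_{11})^2+\sigma_1(Y_{22})^2$, maximised under \eqref{frobenius norm condition ver. 2}, is attained when $a_{34}=a_{43}=0$; I therefore assume this, so that $A_{22}=-aI_2$. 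Feeding $a_{34}=a_{43}=0$ into \eqref{frobenius norm condition ver. 2} and combining it with the displayed values of $\tr(Y_{11}^*Y_{11})$ and $\tr(Y_{22}^*Y_{22})$ yields
\beq \bal
    \norm{Y_{11}}_F^2+\norm{Y_{22}}_F^2-2\bigl(|a_{12}|^2+|a_{21}|^2\bigr)=\tfrac12 .
\eal \eeq
Since case \ref{2nd case: 2 2 bloc-diag} gives $\sigma_1(Y)^2+\sigma_2(Y)^2=\sigma_1(Y_{11})^2+\sigma_1(Y_{22})^2$, the lemma is equivalent to $\sigma_1(Y_{11})^2+\sigma_1(Y_{22})^2\le\tfrac12$, i.e.\ to
\beq \bal
    \sum_{i=2}^{4}\sigma_i(Y_{11})^2+\sum_{i=2}^{4}\sigma_i(Y_{22})^2\ \ge\ 2\bigl(|a_{12}|^2+|a_{21}|^2\bigr).
\eal \eeq

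To prove $\sigma_1(Y_{11})^2+\sigma_1(Y_{22})^2\le\tfrac12$ I would first normalise using operations that leave $\sigma_1(Y_{11})$, $\sigma_1(Y_{22})$ and the block form untouched: conjugating $A_{11},B_{11},B_{22}$ by diagonal unitaries, applying an overall phase, and, if necessary, transposing $X$ (which interchanges $a_{12}\leftrightarrow a_{21}$, $b_{12}\leftrightarrow b_{21}$, $b_{34}\leftrightarrow b_{43}$); this lets me take $a,a_{12},b_{12},b_{34}\ge0$ real and $|a_{12}|\ge|a_{21}|$. Writing $A_{11}=aI_2+N_A$, $B_{11}=bI_2+N_1$, $B_{22}=-bI_2+N_2$ with $N_A,N_1,N_2$ of zero diagonal, we have $Y_{11}=(a+b)I_4+M_1$ and $Y_{22}=(a-b)I_4+M_2$, where $M_i=N_A\otimes I_2+I_2\otimes N_i$. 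The Hermitian matrix $M_iM_i^*$ has off-diagonal entries only in positions internal to the partition $\{1,4\}\cup\{2,3\}$, so $|a\pm b|^2I_4+M_iM_i^*$ is block-diagonal with two $2\times2$ Hermitian blocks of explicit eigenvalues; the scalar shift $(a\pm b)I_4$ adds the term $(a\pm b)M_i^*+\overline{(a\pm b)}\,M_i$, which lives on the four off-block positions, and controlling this term is the crux. When $A_{11},B_{11},B_{22}$ are all normal the phase normalisation annihilates it, $Y_{ii}Y_{ii}^*$ then block-diagonalises, and the bound is immediate — but that subcase is in any case already covered by Lemma \ref{lemma: normal A or B} ($B$ is then normal), so the new content is the non-normal one.

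In the non-normal regime I would obtain $\sigma_1(Y_{ii})^2=\lambda_{\max}(Y_{ii}Y_{ii}^*)$ as the largest root of the characteristic quartic (tractable because the two $2\times2$ sub-blocks are coupled only through the sparse cross term), rewrite $\sigma_1(Y_{11})^2+\sigma_1(Y_{22})^2\le\tfrac12$ as a maximisation over the magnitudes $|a|,|b|,|a_{12}|,|a_{21}|,|b_{12}|,|b_{21}|,|b_{34}|,|b_{43}|$ and the few surviving phases under \eqref{frobenius norm condition ver. 2}, and dispatch it by Lagrange multipliers together with the elementary bounds $\sigma_1(M)^2\le\norm{M}_F^2$ and AM--GM; the maximum should be $\tfrac12$, reached on the boundary at $b=b_{12}=b_{21}=b_{34}=b_{43}=0$, $a=a_{21}=0$, $|a_{12}|=\tfrac12$ (where $B=0$). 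The hard part is exactly this operator-norm estimate: the naive bound $\sigma_1(Y_{11})\le|a+b|+\sigma_1(N_A)+\sigma_1(N_1)$ combined with $\sigma_1(M_1)^2\le\norm{N_A}_F^2+\norm{N_1}_F^2$ is too weak — it can push $\sigma_1(Y_{11})^2+\sigma_1(Y_{22})^2$ above $\tfrac12$ — so one must retain the genuine non-commutative coupling between the scalar shift and the zero-diagonal part of each $Y_{ii}$, which is precisely what forces the case-by-case analysis of the entries of $A$ and $B$ announced before the lemma.
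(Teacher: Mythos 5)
Your reduction to the case $a_{34}=a_{43}=0$ is exactly the paper's first step and is correct, as is the bookkeeping that turns the constraint \eqref{frobenius norm condition ver. 2} into $\norm{Y_{11}}_F^2+\norm{Y_{22}}_F^2-2(|a_{12}|^2+|a_{21}|^2)=\tfrac12$. After that, however, your argument stops being a proof: you announce that you ``would obtain'' $\sigma_1(Y_{ii})^2$ as a root of the characteristic quartic and ``dispatch'' the resulting optimisation by Lagrange multipliers and AM--GM, and you yourself flag that the naive triangle-inequality bounds are too weak. That is the entire content of the lemma, and it is left as a plan. Your observation that $M_iM_i^*$ is block-diagonal under the partition $\{1,4\}\cup\{2,3\}$ is genuinely interesting and not in the paper, but as you note the cross term $(a\pm b)M_i^*+\overline{(a\pm b)}M_i$ lives on the off-block positions, so $Y_{ii}Y_{ii}^*$ is not block-diagonal there either; you still need a way to pass from the $2\times 2$ diagonal blocks to $\lambda_{\max}$ of the full $4\times 4$ matrix, and that is precisely the step your proposal never supplies.

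What the paper actually does at this point is different and shorter. It keeps the natural $\{1,2\}\cup\{3,4\}$ partition of $Z_{ii}=Y_{ii}Y_{ii}^*$ (in which the diagonal blocks have a particularly clean form involving only $b_{12},b_{21}$, resp. $b_{34},b_{43}$, and one of $a_{12},a_{21}$) and invokes the Bourin--Lee inequality \cite{BOURIN20121906}: for a positive semidefinite block matrix, $\lambda_{\max}(Z)\le\lambda_{\max}(Z^{(11)})+\lambda_{\max}(Z^{(22)})$. This converts $\sigma_1(Y_{11})^2+\sigma_1(Y_{22})^2$ into a sum of four explicit $2\times 2$ largest eigenvalues; the square roots appearing in those eigenvalues are then estimated by $|x\bar y+\bar x z|\le |x|(|y|+|z|)$ and $\sqrt{uv}\le\tfrac12(u+v)$, and the total collapses exactly to $\tfrac12$ by \eqref{frobenius norm condition ver. 2}. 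Your sketch is missing an analogue of the Bourin--Lee step (or any other concrete way to bound $\lambda_{\max}$ of the non-block-diagonal $4\times 4$ matrix), so as written the lemma is not proved.
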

\begin{proof}
    Suppose that $\sigma_1(Y_{11})^2 + \sigma_1(Y_{22})^2$ achieve its maximum when $4|a_{34}|^2 + 4|a_{43}|^2 = c_1 > 0, \norm{Y_{11}}_F^2 + \norm{Y_{22}}_F^2 = c_2 > 0$, and $\norm{Y_{33}}_F^2 + \norm{Y_{44}}_F^2 - 4|a_{34}|^2 - 4|a_{43}|^2 = c_3 > 0$. We can easily verify that $c_1 + c_2 + c_3 = 1$. By scaling the entries of $A_{11}$, $B_{11}$, and $B_{22}$ by a factor of $\sqrt{\frac{1}{c_2 + c_3}}$ and setting $a_{34} = a_{43} = 0$, i.e. muliplying $Y_{11}$, $Y_{22}$ by $\sqrt{\frac{1}{c_2 + c_3}}$, the singular values of $Y_{11}$ and $Y_{22}$ are multiplied by $\sqrt{\frac{1}{c_2 + c_3}} > 1$, while preserving the Frobenius norm constraint \eqref{frobenius norm condition}. This contradicts the premise that $\sigma_1(Y_{11})^2 + \sigma_1(Y_{22})^2$ is maximized when $4|a_{34}|^2 + 4|a_{43}|^2 = c_1 > 0$. Therefore, $\sigma_1(Y_{11})^2 + \sigma_1(Y_{22})^2$ is maximized when $a_{34} = a_{43} = 0$.

    \vspace{7pt}
    Consider $Y_{11} Y_{11}^*$ and $Y_{22} Y_{22}^*$
    \beq \bal
        Z_{11} = Y_{11} {Y_{11}}^* = 
        \begin{bmatrix}
            {Z_{11}}^{(11)} & {Z_{11}}^{(12)} \\
            {{Z_{11}}^{(12)}}^* & {Z_{11}}^{(22)}
        \end{bmatrix},
        \quad
        Z_{22} = Y_{22} {Y_{22}}^* = 
        \begin{bmatrix}
            {Z_{22}}^{(11)} & {Z_{22}}^{(12)} \\
            {{Z_{22}}^{(12)}}^* & {Z_{22}}^{(22)}
        \end{bmatrix},
    \eal \eeq
    where
    \beq \bal
        {Z_{11}}^{(11)} = 
        \bma
            |b + a|^2 + |b_{12}|^2 + |a_{12}|^2 & (b + a) \bar b_{21} + (\bar b + \bar a) b_{12} \\
            (\bar b + \bar a) b_{21} + (b + a) \bar b_{12} & |b + a|^2 + |b_{21}|^2 + |a_{12}|^2
        \ema, \\
        {Z_{11}}^{(22)} = 
        \bma
            |b + a|^2 + |b_{12}|^2 + |a_{21}|^2 & (b + a) \bar b_{21} + (\bar b + \bar a) b_{12} \\
            (\bar b + \bar a) b_{21} + (b + a) \bar b_{12} & |b + a|^2 + |b_{21}|^2 + |a_{21}|^2
        \ema, \\
        {Z_{22}}^{(11)} = 
        \bma
            |a - b|^2 + |b_{34}|^2 + |a_{12}|^2 & (a - b) \bar b_{43} + (\bar a - \bar b) b_{34} \\
            (\bar a - \bar b) b_{43} + (a - b) \bar b_{34} & |a - b|^2 + |b_{43}|^2 + |a_{12}|^2 
        \ema, \\
        {Z_{22}}^{(22)} = 
        \bma
            |a - b|^2 + |b_{34}|^2 + |a_{21}|^2 & (a - b) \bar b_{43} + (\bar a - \bar b) b_{34} \\
            (\bar a - \bar b) b_{43} + (a - b) \bar b_{34} & |a - b|^2 + |b_{43}|^2 + |a_{21}|^2
        \ema.
    \eal \eeq
    The largest eigenvalue of a positive semidefinite matrix is bounded above by the sum of the largest eigenvalues of two diagonal blocks of the positive semidefinite matrix. For more details, please refer to \cite{BOURIN20121906}.
    \begin{equation}
    \begin{aligned}
        \sigma_1(Y_{11})^2 +& \sigma_1(Y_{22})^2 \\
        =& \lambda_{\max}(Z_{11}) + \lambda_{\max}(Z_{22}) \\ 
        \leq& \lambda_{\max}({Z_{11}}^{(11)}) + \lambda_{\max}({Z_{11}}^{(22)}) + \lambda_{\max}({Z_{22}}^{(11)}) + \lambda_{\max}({Z_{22}}^{(22)}) \\
        =& \frac{1}{2} \left(2 \left|a_{12}\right|^2 + 2 \left|a + b\right|^2 + \left|b_{12}\right|^2 + \left|b_{21}\right|^2 + \sqrt{\left(\left|b_{12}\right|^2 - \left|b_{21}\right|^2\right)^2 + 4\left|b_{21} \left(\bar a + \bar b\right) + \bar b_{12} \left(a + b\right)\right|^2}\right) + \\
        & \frac{1}{2} \left(2 \left|a_{21}\right|^2 + 2 \left|a + b\right|^2 + \left|b_{12}\right|^2 + \left|b_{21}\right|^2 + \sqrt{\left(\left|b_{12}\right|^2 - \left|b_{21}\right|^2\right)^2 + 4\left|b_{21} \left(\bar a + \bar b\right) + \bar b_{12} \left(a + b\right)\right|^2}\right) + \\
        & \frac{1}{2} \left(2 \left|a_{12}\right|^2 + 2 \left|a - b\right|^2 + \left|b_{34}\right|^2 + \left|b_{43}\right|^2 + \sqrt{\left(\left|b_{34}\right|^2 - \left|b_{43}\right|^2\right)^2 + 4\left|b_{43} \left(\bar a - \bar b\right) + \bar b_{34} \left(a - b\right)\right|^2}\right) + \\
        & \frac{1}{2} \left(2 \left|a_{21}\right|^2 + 2 \left|a - b\right|^2 + \left|b_{34}\right|^2 + \left|b_{43}\right|^2 + \sqrt{\left(\left|b_{34}\right|^2 - \left|b_{43}\right|^2\right)^2 + 4\left|b_{43} \left(\bar a - \bar b\right) + \bar b_{34} \left(a - b\right)\right|^2}\right) \\
        =& 8|a|^2 + 8|b|^2 + 2(|a_{12}|^2 + |a_{21}|^2 + |b_{12}|^2 + |b_{21}|^2 + |b_{34}|^2 + |b_{43}|^2) = \frac{1}{2}.
    \end{aligned}
    \end{equation}
    For more details of the proof of this inequality, please refer to section \ref{sec: proof of inequality} in the appendix.
\end{proof}

\vspace{7pt}
\textbf{At last, we examine the third case \ref{3rd case: 2 2 bloc-diag}.}

\begin{lemma}\label{lemma: 3rd case 2 2 bloc-diag}
    In case \ref{3rd case: 2 2 bloc-diag}, $\sigma_1(Y)^2 + \sigma_2(Y)^2 \leq \frac{1}{2}$.
\end{lemma}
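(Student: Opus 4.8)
The plan is to observe that, among the three cases, case \ref{3rd case: 2 2 bloc-diag} is the most benign: the two blocks carrying $\sigma_1(Y)$ and $\sigma_2(Y)$, namely $Y_{11} = A_{11}\otimes I_2 + I_2\otimes B_{11}$ and $Y_{44} = A_{22}\otimes I_2 + I_2\otimes B_{22}$, are Kronecker sums built from disjoint data --- $\{A_{11},B_{11}\}$ for the first and $\{A_{22},B_{22}\}$ for the second --- and, since $A$ and $B$ are block diagonal, their squared Frobenius norms split as $\norm{A}_F^2 = \norm{A_{11}}_F^2 + \norm{A_{22}}_F^2$ and $\norm{B}_F^2 = \norm{B_{11}}_F^2 + \norm{B_{22}}_F^2$. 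Because of this no block of $A$ or $B$ is shared between $Y_{11}$ and $Y_{44}$, so, in contrast to case \ref{2nd case: 2 2 bloc-diag}, a crude estimate based on subadditivity and multiplicativity of the spectral norm already closes the gap, and neither the block eigenvalue inequality of \cite{BOURIN20121906} nor a preliminary scaling argument is needed.

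Concretely, the case assumption gives $\sigma_1(Y)^2 + \sigma_2(Y)^2 = \sigma_1(Y_{11})^2 + \sigma_1(Y_{44})^2$. Since $\sigma_1(\cdot)$ is the spectral norm it is subadditive, and since the singular values of a Kronecker product are the pairwise products of those of the factors one has $\sigma_1(A_{11}\otimes I_2) = \sigma_1(A_{11})$ and $\sigma_1(I_2\otimes B_{11}) = \sigma_1(B_{11})$, and likewise for the $(2,2)$ blocks; hence
\beq
    \sigma_1(Y_{11}) \le \sigma_1(A_{11}) + \sigma_1(B_{11}), \qquad \sigma_1(Y_{44}) \le \sigma_1(A_{22}) + \sigma_1(B_{22}).
\eeq
Squaring, using $(x+y)^2 \le 2(x^2+y^2)$, and then $\sigma_1(M)^2 \le \norm{M}_F^2$ for each of $M\in\{A_{11},B_{11},A_{22},B_{22}\}$, I would obtain
\beq \bal
    \sigma_1(Y_{11})^2 + \sigma_1(Y_{44})^2 &\le 2\left(\sigma_1(A_{11})^2 + \sigma_1(B_{11})^2 + \sigma_1(A_{22})^2 + \sigma_1(B_{22})^2\right) \\
    &\le 2\left(\norm{A_{11}}_F^2 + \norm{A_{22}}_F^2 + \norm{B_{11}}_F^2 + \norm{B_{22}}_F^2\right) = 2\left(\norm{A}_F^2 + \norm{B}_F^2\right),
\eal \eeq
where the last equality uses that $A$ and $B$ are block diagonal. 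By the normalization \eqref{frobenius norm condition} the last quantity equals $2\cdot\frac14 = \frac12$, which is precisely the claimed inequality.

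I do not anticipate a genuine obstacle here; the one point requiring care is the bookkeeping that ensures the estimate for the two blocks $Y_{11}$ and $Y_{44}$ actually bounds $\sigma_1(Y)^2 + \sigma_2(Y)^2$, and this is exactly what the hypothesis of case \ref{3rd case: 2 2 bloc-diag} supplies. It is worth remarking that the argument uses neither the traceless condition \eqref{tr=0 cond} nor the explicit entrywise forms of $Y_{ii}Y_{ii}^*$ displayed above: the only structural input is that $A$ and $B$ are block diagonal with $2\times 2$ blocks, so that their squared Frobenius norms distribute over the diagonal blocks. If one instead wanted a proof parallel to Lemma \ref{lemma: 2nd case 2 2 bloc-diag}, one could apply the inequality of \cite{BOURIN20121906} to $Y_{11}Y_{11}^*$ and $Y_{44}Y_{44}^*$ and bound the resulting square-root terms by AM--GM, but that route is strictly longer than necessary.
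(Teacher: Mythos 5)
Your proof is correct and takes a genuinely shorter route than the paper's. The paper first runs a scaling argument to reduce to $a=b$, then an entrywise absolute-value argument to make $Y_{11}$ nonnegative and $Y_{44}$ nonpositive, then splits each block into a scalar matrix plus an off-diagonal remainder, computes the largest eigenvalue of the remainder's Gram matrix explicitly, and bounds that by AM--GM. You bypass all of this by applying the triangle inequality for the operator norm directly to the Kronecker sums $Y_{11}=A_{11}\otimes I_2+I_2\otimes B_{11}$ and $Y_{44}=A_{22}\otimes I_2+I_2\otimes B_{22}$, then squaring via $(x+y)^2\le 2(x^2+y^2)$ and dominating $\sigma_1(\cdot)^2$ by $\norm{\cdot}_F^2$ at the level of the $2\times 2$ blocks. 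Because the four blocks $A_{11},B_{11},A_{22},B_{22}$ entering $Y_{11}$ and $Y_{44}$ are pairwise distinct, the block-diagonal additivity $\norm{A}_F^2=\norm{A_{11}}_F^2+\norm{A_{22}}_F^2$ (and likewise for $B$) together with \eqref{frobenius norm condition} closes the bound at exactly $\tfrac12$. Your remark that the argument uses neither the traceless condition \eqref{tr=0 cond} nor the equal-diagonal-entries normalization of the $2\times 2$ blocks is also accurate, so your version establishes a marginally stronger statement. The reason this estimate cannot be recycled for case \ref{2nd case: 2 2 bloc-diag} is precisely the one you identify: there the two relevant blocks of $Y$ share $A_{11}$, so the crude bound double-counts $\norm{A_{11}}_F^2$ and overshoots $\tfrac12$, which is why the finer block-eigenvalue inequality of \cite{BOURIN20121906} is invoked in Lemma \ref{lemma: 2nd case 2 2 bloc-diag}.
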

\begin{proof}
    Suppose that $\sigma_1(Y_{11})^2 + \sigma_1(Y_{44})^2$ achieve its maximum when $8|b - a|^2 = c_1 > 0, \norm{Y_{11}}_F^2 + \norm{Y_{44}}_F^2 = c_2 > 0$, and $2(|a_{12}|^2 + |a_{21}|^2 + |b_{34}|^2 + |b_{43}|^2 + |a_{34}|^2 + |a_{43}|^2 + |b_{12}|^2 + |b_{21}|^2) = c_3 > 0$. We can easily verify that $c_1 + c_2 + c_3 = 1$. By scaling the entries of $Y_{11}$ and $Y_{44}$ by a factor of $\sqrt{\frac{1}{c_2 + c_3}}$ and setting $a = b$, the singular values of $Y_{11}$ and $Y_{44}$ are multiplied by $\sqrt{\frac{1}{c_2 + c_3}} > 1$, while preserving the Frobenius norm constraint \eqref{frobenius norm condition}. This contradicts the premise that $\sigma_1(Y_{11})^2 + \sigma_1(Y_{44})^2$ is maximized when $8|b - a|^2 = c_1 > 0$. Therefore, $\sigma_1(Y_{11})^2 + \sigma_1(Y_{44})^2$ is maximized when $a = b$.

\vspace{7pt}
Besides, since 
\beq \bal
    &\sigma_1(Y_{11}) = \max_{\|x\|_F = 1}\|Y_{11} x\|_F \leq \max_{\|x\|_F = 1}\||Y_{11}| |x|\|_F \leq \sigma_1(|Y_{11}|), \\
    &\sigma_1(Y_{44}) = \max_{\|x\|_F = 1}\|Y_{22} x\|_F \leq \max_{\|x\|_F = 1}\|(-|Y_{44}|) (-|x|)\|_F \leq \sigma_1(-|Y_{44}|),
\eal \eeq
where $|\cdot|$ denotes the entry-wise absolute value of a matrix or a vector. We obtain that $\sigma_1(Y_{11})^2 + \sigma_1(Y_{44})^2$ is maximized when all the entries of $Y_{11}$ are nonnegative, and all the entries of $Y_{44}$ are nonpositive.

\vspace{7pt}
Based on the conclusions above, we can simplify the expression of $Y_{11}$ and $Y_{44}$ as follows
\beq \bal
    Y_{11} = 
    \begin{bmatrix}
        2a & b_{12} & a_{12} & 0 \\
        b_{21} & 2a & 0 & a_{12} \\
        a_{21} & 0 & 2a & b_{12} \\
        0 & a_{21} & b_{21} & 2a
    \end{bmatrix}, \quad
    Y_{44} = 
    \begin{bmatrix}
        -2a & b_{34} & a_{34} & 0 \\
        b_{43} & -2a & 0 & a_{34} \\
        a_{43} & 0 & -2a & b_{34} \\
        0 & a_{43} & b_{43} & -2a
    \end{bmatrix}.
\eal \eeq
    $Y_{11}$ and $Y_{44}$ can be decomposed into the following form
    \begin{align}
        &Y_{11} = {Y_{11}}^{(1)} + {Y_{11}}^{(2)} = 
        \begin{bmatrix}
            2a & 0 & 0 & 0 \\
            0 & 2a & 0 & 0 \\
            0 & 0 & 2a & 0 \\
            0 & 0 & 0 & 2a
        \end{bmatrix}
        +
        \begin{bmatrix}
            0 & b_{12} & a_{12} & 0 \\
            b_{21} & 0 & 0 & a_{12} \\
            a_{21} & 0 & 0 & b_{12} \\
            0 & a_{21} & b_{21} & 0
        \end{bmatrix}, \\
        &Y_{44} = {Y_{44}}^{(1)} + {Y_{44}}^{(2)} = 
        \begin{bmatrix}
            -2a & 0 & 0 & 0 \\
            0 & -2a & 0 & 0 \\
            0 & 0 & -2a & 0 \\
            0 & 0 & 0 & -2a
        \end{bmatrix}
        +
        \begin{bmatrix}
            0 & b_{34} & a_{34} & 0 \\
            b_{43} & 0 & 0 & a_{34} \\
            a_{43} & 0 & 0 & b_{34} \\
            0 & a_{43} & b_{43} & 0
        \end{bmatrix}.
    \end{align}
    Then, according to the triangular inequality of the operator norm, which is the largest singular value in our scenario, we have
    \beq \bal
        \sigma_1(Y)^2 + \sigma_2(Y)^2 &= \sigma_1(Y_{11})^2 + \sigma_1(Y_{44})^2 \leq \left(\sigma_1({Y_{11}}^{(1)}) + \sigma_1({Y_{11}}^{(2)})\right)^2 + \left(\sigma_1({Y_{44}}^{(1)}) + \sigma_1({Y_{44}}^{(2)})\right)^2 \\
        &\leq 2 \left(\sigma_1({Y_{11}}^{(1)})^2 + \sigma_1({Y_{11}}^{(2)})^2 + \sigma_1({Y_{44}}^{(1)})^2 + \sigma_1({Y_{44}}^{(2)})^2\right).
    \eal \eeq
    Without loss of generality, we can only work with $Y_{11}$ and apply the result to $Y_{44}$
    \beq \bal
        \sigma_1({Y_{11}}^{(2)})^2 &= \lambda_{\max}({Y_{11}}^{(2)} {{Y_{11}}^{(2)}}^*) \\
        &= \frac{1}{2} \left(a_{12}^2 + a_{21}^2 + b_{12}^2 + b_{21}^2 + \sqrt{\left(\left(a_{12} + a_{21}\right)^2 + \left(b_{12} - b_{21}\right)^2\right) \left(\left(a_{12} - a_{21}\right)^2 + \left(b_{12} + b_{21}\right)^2\right)}\right) \\
        &\leq \frac{1}{2} \left(a_{12}^2 + a_{21}^2 + b_{12}^2 + b_{21}^2 + \frac{1}{2} \left(\left(a_{12} + a_{21}\right)^2 + \left(b_{12} - b_{21}\right)^2 + \left(a_{12} - a_{21}\right)^2 + \left(b_{12} + b_{21}\right)^2\right)\right) \\
        &= a_{12}^2 + a_{21}^2 + b_{12}^2 + b_{21}^2.
    \eal \eeq
    In the same way, we obtain
    \beq \bal
        \sigma_1({Y_{44}}^{(2)})^2 \leq a_{34}^2 + a_{43}^2 + b_{34}^2 + b_{43}^2.
    \eal \eeq
    Now, we have
    \beq \bal
        \sigma_1(Y)^2 + \sigma_2(Y)^2 &\leq 2 \left(\sigma_1({Y_{11}}^{(1)})^2 + \sigma_1({Y_{11}}^{(2)})^2 + \sigma_1({Y_{44}}^{(1)})^2 + \sigma_1({Y_{44}}^{(2)})^2\right) \\
        &= 2(8a^2 + a_{12}^2 + a_{21}^2 + b_{12}^2 + b_{21}^2 + a_{34}^2 + a_{43}^2 + b_{34}^2 + b_{43}^2).
    \eal \eeq
    Recall that $\norm{A}_F^2 + \norm{B}_F^2 = \frac{1}{4} = 8a^2 + a_{12}^2 + a_{21}^2 + b_{12}^2 + b_{21}^2 + a_{34}^2 + a_{43}^2 + b_{34}^2 + b_{43}^2$, we have $\sigma_1(Y)^2 + \sigma_2(Y)^2 \leq \frac{1}{2}$.
\end{proof}

\vspace{7pt}
\textbf{According to Lemmas \ref{lemma: 1st case 2 2 bloc-diag}, \ref{lemma: 2nd case 2 2 bloc-diag}, and \ref{lemma: 3rd case 2 2 bloc-diag}, cases \ref{1st case: 2 2 bloc-diag}, \ref{2nd case: 2 2 bloc-diag}, and \ref{3rd case: 2 2 bloc-diag} have all been proven. Therefore, Theorem \ref{thm: 2 2 bloc-diag} holds.}

\subsection{$B$ is unitarily equivalent to $-A^T$ or $-A$}
\label{sec:UE=-ATor-A}

The main conclusion of this subsection is Theorem \ref{thm: B = -A^T or -A}, which is supported by the following analysis. First, we reformulate Conjecture \ref{conjecture: distillability} into another equivalent form \eqref{eq form}. Then, building upon the foundational works on the Frobenius norm of the commutator and its variants, we establish the proof of Theorem \ref{thm: B = -A^T or -A}. 

\begin{theorem}\label{thm: B = -A^T or -A}
    Conjecture \ref{conjecture: distillability} holds when $B$ is unitarily equivalent to either $-A^T$ or $-A$.
\end{theorem}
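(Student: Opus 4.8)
The plan is to handle the two cases $B\sim -A^T$ and $B\sim -A$ in parallel by recognizing, via Lemma~\ref{lemma: krom sum and commutators}, that $X$ is (unitarily equivalent to) a commutator-type linear map on $M_4(\bbC)$, and then applying the sharp bounds of Lemma~\ref{lemma: frob norm xy-yx}. First I would eliminate the unitary equivalence: since $A\otimes I$ commutes with $I\otimes U$ for every unitary $U$, if $B=U(-A^T)U^*$ then $X=A\otimes I+I\otimes B=(I\otimes U)\big(A\otimes I-I\otimes A^T\big)(I\otimes U^*)$, and similarly with $-A$ replacing $-A^T$; because singular values are unitary invariants and $\|U(-A^T)U^*\|_F=\|-A\|_F=\|A\|_F$, conditions \eqref{tr=0 cond} and \eqref{frobenius norm condition} are preserved, so it suffices to treat $B=-A^T$ and $B=-A$ exactly. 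In both cases \eqref{frobenius norm condition} reduces to $2\|A\|_F^2=\tfrac14$, hence $\|A\|_F^2=\tfrac18$.

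Next I would pass from $X\in M_{16}(\bbC)$ to an operator on $M_4(\bbC)$. Applying Lemma~\ref{lemma: krom sum and commutators} with the conjecture's $B$ equal to $-A^T$ gives $X\operatorname{vec}D=\operatorname{vec}\!\big(DA^T-A^TD\big)$ for all $D\in M_4(\bbC)$, while taking $B=-A$ gives $X\operatorname{vec}D=\operatorname{vec}\!\big(DA^T-AD\big)$. Since $\operatorname{vec}$ is an isometry from $(M_4(\bbC),\|\cdot\|_F)$ onto $(\bbC^{16},\|\cdot\|_2)$, the singular values of $X$ equal those of the map $T_1\colon D\mapsto DA^T-A^TD$ in the first case and $T_2\colon D\mapsto DA^T-AD$ in the second; in particular $\sigma_1(X)=\sup_{\|D\|_F=1}\|T_i(D)\|_F$.

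Then I would invoke Lemma~\ref{lemma: frob norm xy-yx}: inequality \eqref{xy-yx} with $(Y,Z)=(D,A^T)$ bounds $\|DA^T-A^TD\|_F\le\sqrt2\,\|D\|_F\|A^T\|_F=\sqrt2\,\|A\|_F\,\|D\|_F$, and inequality \eqref{xy-yxt} with $(Y,Z)=(A,D)$ bounds $\|DA^T-AD\|_F=\|AD-DA^T\|_F\le\sqrt2\,\|A\|_F\,\|D\|_F$. Thus in both cases $\sigma_1(X)\le\sqrt2\,\|A\|_F=\sqrt2\cdot\tfrac1{2\sqrt2}=\tfrac12$, and since $\sigma_2(X)\le\sigma_1(X)$ we conclude $\sigma_1(X)^2+\sigma_2(X)^2\le 2\sigma_1(X)^2\le\tfrac12$, which is exactly the inequality in Conjecture~\ref{conjecture: distillability}; combining the two cases proves Theorem~\ref{thm: B = -A^T or -A}. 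The argument is short, and the closest thing to an obstacle is essentially bookkeeping: matching the precise commutator that appears ($DA^T-A^TD$ versus $DA^T-AD$) to the correct one of the two estimates in Lemma~\ref{lemma: frob norm xy-yx}, and verifying that conjugating $B$ by $U$ genuinely conjugates all of $X$. The conceptual content worth emphasizing is that here $X$ \emph{is} a commutator map, so the tight commutator norm bounds of \cite{Vong2008ProofOB,FONG20111193} apply directly and already yield $\sigma_1(X)\le\tfrac12$, with equality in the target inequality forced only in the degenerate situation $\sigma_1(X)=\sigma_2(X)=\tfrac12$.
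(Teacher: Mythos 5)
Your proof is correct, and it uses the same essential ingredients as the paper — Lemma~\ref{lemma: krom sum and commutators} to identify $X$ with a commutator-type map on $M_4(\bbC)$, and the two commutator Frobenius-norm bounds of Lemma~\ref{lemma: frob norm xy-yx} — but the bookkeeping is cleaner. The paper first derives the variational characterization
\begin{equation*}
\sigma_1(X)^2+\sigma_2(X)^2=\max_{\substack{V_1,V_2\in M_4(\bbC),\ \tr V_1^*V_2=0\\ \|V_1\|_F=\|V_2\|_F=1}}\|BV_1+V_1A^T\|_F^2+\|BV_2+V_2A^T\|_F^2,
\end{equation*}
then bounds each of the two summands separately by $2\|A\|_F^2$ (absorbing the unitary $U$ into $U^*V_i$), giving $\le 4\|A\|_F^2=\tfrac12$. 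You instead observe that (i) because $I\otimes U$ commutes with $A\otimes I$, the conjugation $B\mapsto UBU^*$ conjugates $X$ itself, so one may assume $B=-A^T$ or $B=-A$ exactly; (ii) the same commutator bounds yield $\sigma_1(X)\le\sqrt2\,\|A\|_F=\tfrac12$ directly as an operator-norm bound; and (iii) the trivial inequality $\sigma_1^2+\sigma_2^2\le 2\sigma_1^2$ finishes. This avoids the variational machinery entirely (indeed, the orthogonality constraint $\tr V_1^*V_2=0$ in the paper's max is never actually used), and the numerics check out: $2\|A\|_F^2=\tfrac14$ gives $\|A\|_F=\tfrac1{2\sqrt2}$, so $\sigma_1(X)\le\tfrac12$ and $\sigma_1^2+\sigma_2^2\le\tfrac12$. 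What the paper's route buys is that its variational reformulation \eqref{eq form} makes explicit the link between the conjecture and commutator norms that the authors wish to emphasize; what your route buys is brevity and the slightly stronger intermediate conclusion $\sigma_1(X)\le\tfrac12$. Both are sound.
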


The proof of Theorem \ref{thm: B = -A^T or -A} is supported by the following analysis, Lemmas \ref{lemma: B = -A^T} and \ref{lemma: B = -A}.

\vspace{7pt}
It is known that the sum of squares of the two largest singular values of $X$ is equal to the sum of the two largest eigenvalues of $X^* X$. According to the variational characterization of eigenvalues of Hermitian matrices, we have
\beq \label{equality 1}
\bal
    \s_1(X)^2 + \s_2(X)^2 &= \l_1(X^* X)^2 + \l_2(X^* X)^2 \\
    &= \max_{\substack{U \in M_{16,2}(\bbC) \\ U^* U = I_2}} \tr U^* X^* X U.
\eal
\eeq
For more details, please refer to Corollary 4.3.39. in \cite{Horn_Johnson_2012}. 

\vspace{7pt}
We partition $U = \bma u_1 & u_2 \ema$ with $u_1, u_2 \in \bbC^{16}$
\beq \label{equality 2}
\bal
    \max_{\substack{U \in M_{16,2} \\ U^* U = I_2}} \tr U^* X^* X U &= \max_{\substack{u_1, u_2 \in \bbC^{16}, \braket{u_1}{u_2} = 0 \\ \norm{u_1}_F = \norm{u_2}_F = 1}} \tr \bma
        u_1^* X^* X u_1 & u_1^* X^* X u_2 \\
        u_2^* X^* X u_1 & u_2^* X^* X u_2
    \ema \\
    &= \max_{\substack{u_1, u_2 \in \bbC^{16}, \braket{u_1}{u_2} = 0 \\ \norm{u_1}_F = \norm{u_2}_F = 1}} u_1^* X^* X u_1 + u_2^* X^* X u_2 \\
    & = \max_{\substack{u_1, u_2 \in \bbC^{16}, \braket{u_1}{u_2} = 0 \\ \norm{u_1}_F = \norm{u_2}_F = 1}} \norm{X u_1}_F^2 + \norm{X u_2}_F^2. 
\eal 
\eeq

Consider the equalities \eqref{equality 1} and \eqref{equality 2} we proven above
\beq \bal
    \s_1(X)^2 + \s_2(X)^2 = \max_{\substack{\operatorname{vec} V_1, \operatorname{vec} V_2 \in \bbC^{16} \\ \braket{\operatorname{vec} V_1}{\operatorname{vec} V_2} = 0 \\ \norm{\operatorname{vec} V_1}_F = \norm{\operatorname{vec} V_2}_F = 1}} \norm{X \operatorname{vec} V_1}_F^2 + \norm{X \operatorname{vec} V_2}_F^2,
\eal \eeq
where $V_1, V_2 \in M_4(\bbC)$ (for the definition of $\operatorname{vec}$, please refer to Definition \ref{def: vec}). Thanks to Lemma \ref{lemma: krom sum and commutators}, we have
\beq \label{eq form} \bal
    \s_1(X)^2 + \s_2(X)^2 = \max_{\substack{V_1, V_2 \in \bbC^{16}, \tr V_1^* V_2 = 0 \\ \norm{V_1}_F = \norm{V_2}_F = 1}} \norm{B V_1 + V_1 A^T}_F^2 + \norm{B V_2 + V_2 A^T}_F^2.
\eal \eeq

\textbf{We first demonstrate that Conjecture \ref{conjecture: distillability} is valid when $B$ is unitarily equivalent to $-A^T$.}

\begin{lemma}\label{lemma: B = -A^T}
    Under the context of Conjecture \ref{conjecture: distillability}, $\s_1(X)^2 + \s_2(X)^2 \leq \frac{1}{2}$ when $B = - U A^T U^*$, where $U$ is a $4$ by $4$ unitary matrix. 
\end{lemma}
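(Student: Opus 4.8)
The plan is to exploit the reformulation \eqref{eq form} of Conjecture \ref{conjecture: distillability} and to turn the Kronecker‑sum expression $B V_i + V_i A^T$ into a genuine commutator by a unitary change of variables, after which inequality \eqref{xy-yx} of Lemma \ref{lemma: frob norm xy-yx} closes the argument.

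First I would substitute $W_i := U^* V_i$ for $i=1,2$ in \eqref{eq form}. Since $U$ is unitary, $\norm{W_i}_F = \norm{V_i}_F$ and $\tr W_1^* W_2 = \tr V_1^* V_2$, so the constraint set in \eqref{eq form} is unchanged under $V_i \mapsto W_i$. Writing $V_i = U W_i$ and using $B = -U A^T U^*$,
\[
    B V_i + V_i A^T = -U A^T U^* U W_i + U W_i A^T = U\bigl(W_i A^T - A^T W_i\bigr),
\]
so by unitary invariance of the Frobenius norm $\norm{B V_i + V_i A^T}_F = \norm{W_i A^T - A^T W_i}_F$. Hence \eqref{eq form} becomes
\[
    \s_1(X)^2 + \s_2(X)^2 = \max_{\substack{\tr W_1^* W_2 = 0 \\ \norm{W_1}_F = \norm{W_2}_F = 1}} \norm{W_1 A^T - A^T W_1}_F^2 + \norm{W_2 A^T - A^T W_2}_F^2 .
\]

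Next I would bound each summand separately by applying \eqref{xy-yx} of Lemma \ref{lemma: frob norm xy-yx} (the Böttcher--Wenzel inequality) with $Y = A^T$ and $Z = W_i$, using $\norm{A^T}_F = \norm{A}_F$ and $\norm{W_i}_F = 1$:
\[
    \norm{W_i A^T - A^T W_i}_F^2 \le 2\norm{A^T}_F^2\norm{W_i}_F^2 = 2\norm{A}_F^2 .
\]
Summing over $i = 1,2$ gives $\s_1(X)^2 + \s_2(X)^2 \le 4\norm{A}_F^2$. Finally, $\norm{B}_F = \norm{-U A^T U^*}_F = \norm{A^T}_F = \norm{A}_F$, so the normalization \eqref{frobenius norm condition} forces $2\norm{A}_F^2 = \tfrac14$, i.e. $\norm{A}_F^2 = \tfrac18$, and therefore $\s_1(X)^2 + \s_2(X)^2 \le 4\cdot\tfrac18 = \tfrac12$.

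There is essentially no hard step here: the one idea is the unitary substitution $W_i = U^* V_i$, which collapses the action of $X$ into the commutator $W_i A^T - A^T W_i$, and after that the commutator norm bound of Lemma \ref{lemma: frob norm xy-yx} is already exactly tight against the Frobenius‑norm constraint. Note that the orthogonality condition $\tr W_1^* W_2 = 0$ is not used — each term is estimated on its own — so the bound is somewhat lossy; this also indicates that the companion case $B \simeq -A$ (Lemma \ref{lemma: B = -A}) should be handled by the same substitution together with the variant \eqref{xy-yxt} of Lemma \ref{lemma: frob norm xy-yx} rather than \eqref{xy-yx}, since there $B V_i + V_i A^T$ reduces to $U(W_i A^T - A W_i)$.
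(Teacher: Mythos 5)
Your proof is correct and is essentially the same as the paper's: both substitute $B = -UA^TU^*$ into the reformulation \eqref{eq form}, factor the unitary $U$ out of $BV_i + V_iA^T$ (the paper applies unitary invariance to $U^*V_i$ directly, you name it $W_i$), and then close with the Böttcher--Wenzel bound \eqref{xy-yx} of Lemma \ref{lemma: frob norm xy-yx} together with $\norm{A}_F^2 = \tfrac18$ forced by \eqref{frobenius norm condition}. Your observation that the orthogonality constraint $\tr V_1^*V_2 = 0$ is never used is accurate and is also implicit in the paper's estimate.
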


\bpf
    For $B = - U A^T U^*$, we have
    \beq \bal
        \s_1(X)^2 + \s_2(X)^2 &= \max_{\substack{V_1, V_2 \in \bbC^{16}, \tr V_1^* V_2 = 0 \\ \norm{V_1}_F = \norm{V_2}_F = 1}} \norm{- U A^T U^* V_1 + V_1 A^T}_F^2 + \norm{- U A^T U^* V_2 + V_2 A^T}_F^2 \\
        &= \max_{\substack{V_1, V_2 \in \bbC^{16}, \tr V_1^* V_2 = 0 \\ \norm{V_1}_F = \norm{V_2}_F = 1}} \norm{- A^T U^* V_1 + U^* V_1 A^T}_F^2 + \norm{- A^T U^* V_2 + U^* V_2 A^T}_F^2.
    \eal \eeq
    According to \eqref{xy-yx} of Lemma \ref{lemma: frob norm xy-yx}
    \beq \bal
        \forall Y, Z \in M_n(\bbC), \quad \norm{YZ - ZY}_F \leq \sqrt 2 \norm{Y}_F \norm{Z}_F,
    \eal \eeq
    we have
    \beq \bal
        \s_1(X)^2 + \s_2(X)^2 \leq 2 (\norm{A^T}_F^2 \norm{U^* V_1}_F^2 + \norm{A^T}_F^2 \norm{U^* V_2}_F^2) = \frac{1}{2}.
    \eal \eeq
\epf

\textbf{Similarly, we can prove that Conjecture \ref{conjecture: distillability} holds when $B$ is unitarily equivalent to $-A$.}

\begin{lemma}\label{lemma: B = -A}
    Under the context of Conjecture \ref{conjecture: distillability}, $\s_1(X)^2 + \s_2(X)^2 \leq \frac{1}{2}$ when $B = - U A U^*$, where $U$ is a $4$ by $4$ unitary matrix. 
\end{lemma}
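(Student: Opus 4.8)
The plan is to closely mirror the proof of Lemma \ref{lemma: B = -A^T}, the only change being that I will invoke the companion inequality \eqref{xy-yxt} of Lemma \ref{lemma: frob norm xy-yx} instead of \eqref{xy-yx}. Starting from the equivalent formulation \eqref{eq form}, I substitute $B = -U A U^*$ to obtain
\beq \bal
    \s_1(X)^2 + \s_2(X)^2 = \max_{\substack{V_1, V_2 \in \bbC^{16}, \tr V_1^* V_2 = 0 \\ \norm{V_1}_F = \norm{V_2}_F = 1}} \norm{-U A U^* V_1 + V_1 A^T}_F^2 + \norm{-U A U^* V_2 + V_2 A^T}_F^2.
\eal \eeq

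First I would absorb the unitary into the optimization variables: since the Frobenius norm is invariant under left multiplication by $U^*$, for $i = 1, 2$ we have $\norm{-U A U^* V_i + V_i A^T}_F = \norm{-A (U^* V_i) + (U^* V_i) A^T}_F$. Writing $W_i := U^* V_i$, each term becomes $\norm{A W_i - W_i A^T}_F$, with $\norm{W_i}_F = \norm{V_i}_F = 1$ (and $\tr W_1^* W_2 = \tr V_1^* V_2 = 0$, although this orthogonality constraint is not needed for the bound). Then I apply \eqref{xy-yxt} with $Y = A$ and $Z = W_i$, which gives $\norm{A W_i - W_i A^T}_F \leq \sqrt 2 \, \norm{A}_F \norm{W_i}_F = \sqrt 2 \, \norm{A}_F$.

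Summing the two contributions yields $\s_1(X)^2 + \s_2(X)^2 \leq 2\big(\norm{A}_F^2 + \norm{A}_F^2\big) = 4 \norm{A}_F^2$. To finish, I note that $B = -U A U^*$ forces $\norm{B}_F = \norm{A}_F$, so the Frobenius-norm condition \eqref{frobenius norm condition} becomes $2 \norm{A}_F^2 = \frac{1}{4}$, i.e. $\norm{A}_F^2 = \frac{1}{8}$; hence $\s_1(X)^2 + \s_2(X)^2 \leq \frac{1}{2}$, as claimed. There is essentially no obstacle here: the only point requiring attention is that, after peeling off $U$, the relevant expression has the shape $A W - W A^T$, so one must use the transpose-commutator bound \eqref{xy-yxt} rather than the ordinary commutator bound \eqref{xy-yx} — the two cases $B \sim -A^T$ and $B \sim -A$ being genuine mirror images of one another. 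Together with Lemma \ref{lemma: B = -A^T}, this completes the proof of Theorem \ref{thm: B = -A^T or -A}.
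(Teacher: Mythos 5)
Your proof is correct and follows the same route as the paper's: peel off the unitary via Frobenius-norm invariance, apply the transpose-commutator bound \eqref{xy-yxt} with $Y=A$ and $Z=U^*V_i$, and use $\norm{A}_F=\norm{B}_F$ together with \eqref{frobenius norm condition} to conclude. The only difference is cosmetic — you introduce the abbreviation $W_i=U^*V_i$ and spell out the final arithmetic $\norm{A}_F^2=\tfrac{1}{8}$, which the paper leaves implicit.
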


\bpf
    For $B = - U A U^*$, we have
    \beq \bal
        \s_1(X)^2 + \s_2(X)^2 &= \max_{\substack{V_1, V_2 \in \bbC^{16}, \tr V_1^* V_2 = 0 \\ \norm{V_1}_F = \norm{V_2}_F = 1}} \norm{- U A U^* V_1 + V_1 A^T}_F^2 + \norm{- U A U^* V_2 + V_2 A^T}_F^2 \\
        &= \max_{\substack{V_1, V_2 \in \bbC^{16}, \tr V_1^* V_2 = 0 \\ \norm{V_1}_F = \norm{V_2}_F = 1}} \norm{- A U^* V_1 + U^* V_1 A^T}_F^2 + \norm{- A U^* V_2 + U^* V_2 A^T}_F^2.
    \eal \eeq
    According to \eqref{xy-yxt} of Lemma \ref{lemma: frob norm xy-yx}
    \beq \bal
        \forall Y, Z \in M_n(\bbC), \quad \norm{YZ - ZY^T}_F \leq \sqrt 2 \norm{Y}_F \norm{Z}_F,
    \eal \eeq
    we have
    \beq \bal
        \s_1(X)^2 + \s_2(X)^2 \leq 2 (\norm{A^T}_F^2 \norm{U^* V_1}_F^2 + \norm{A^T}_F^2 \norm{U^* V_2}_F^2) = \frac{1}{2}.
    \eal \eeq
\epf

\textbf{According to Lemmas \ref{lemma: B = -A^T} and \ref{lemma: B = -A}, Theorem \ref{thm: B = -A^T or -A} holds.}

\subsection{Matrices with distinct eigenvalues} \label{subsec: diag}

In this subsection, a simplified version of the distillability conjecture is introduced and presented as Theorem \ref{thm: diag prob set}. Its proof is established through Lemmas \ref{lemma: phi} and \ref{lemma: dense}. Detailed discussions on continuity, compactness, and density are also provided.

\begin{theorem} \label{thm: diag prob set}
    Conjecture \ref{conjecture: distillability} holds for the set $\cX$ defined as follows
    \begin{align}
        \left\{A, B \in M_4(\bbC)\left| \tr A = \tr B = 0, \norm{A}_F^2 + \norm{B}_F^2 = \frac{1}{4}\right\}\right.,
    \end{align}
    if and only if Conjecture \ref{conjecture: distillability} holds for the set $\cY$ defined as follows
    \begin{equation}
        \left\{
        \begin{array}{c|c}
            \multirow{2}{*}{$A, B \in M_4(\bbC)$} & \tr A = \tr B = 0, \norm{A}_F^2 + \norm{B}_F^2 = \frac{1}{4}, \\
            & A \text{ and } B \text{ only possess distinct eigenvalues}
        \end{array}
        \right\}.
    \end{equation}
\end{theorem}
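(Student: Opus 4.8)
One implication of the biconditional --- that Conjecture \ref{conjecture: distillability} on $\cX$ forces it on $\cY$ --- is immediate because $\cY\sue\cX$, so the content is the converse. The plan is the density/continuity packaging anticipated by the two supporting lemmas. I would introduce the scalar function
\beq \bal
    \phi : \cX \to \bbR, \qquad \phi(A,B) := \s_1(X)^2 + \s_2(X)^2, \quad X := A \ox I + I \ox B,
\eal \eeq
so that, for any $\cS\sue\cX$, Conjecture \ref{conjecture: distillability} holds on $\cS$ exactly when $\phi\le\tfrac12$ on $\cS$. The goal then reduces to: if $\phi\le\tfrac12$ on the dense subset $\cY$, then $\phi\le\tfrac12$ on all of $\cX$.

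\textbf{Step 1: continuity of $\phi$ (Lemma \ref{lemma: phi}).} First I would check that $\phi$ is continuous on $\cX$ with the topology induced by $\norm{\cdot}_2$ of Definition \ref{def: 2 norm}. The assignment $(A,B)\mapsto X=A\ox I+I\ox B$ is linear, hence continuous; the singular values satisfy the perturbation bound $|\s_k(X)-\s_k(X')|\le \s_1(X-X')$, so the largest and second-largest singular values are continuous in the entries of $X$; and squaring and adding preserves continuity. Hence $\phi$ is continuous, and since $\cX$ is closed and bounded in the finite-dimensional space $M_4(\bbC)\times M_4(\bbC)$ it is compact, so $\phi$ attains its maximum there.

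\textbf{Step 2: density of $\cY$ in $\cX$ (Lemma \ref{lemma: dense}).} Given $(A,B)\in\cX$, I would approximate $A$ and $B$ by traceless matrices $\hat A,\hat B$ all of whose eigenvalues are distinct. This is possible because the matrices with a repeated eigenvalue form the zero locus of the discriminant of the characteristic polynomial --- a polynomial in the entries that is not identically zero on the codimension-one traceless subspace, as $\diag(1,2,3,-6)$ witnesses --- so traceless matrices with distinct eigenvalues are dense in the traceless subspace. Since $\norm{A}_F^2+\norm{B}_F^2=\tfrac14>0$, for $\hat A,\hat B$ close enough to $A,B$ the number $t:=\tfrac14\big/(\norm{\hat A}_F^2+\norm{\hat B}_F^2)$ is well defined, positive, and close to $1$; setting $A':=\sqrt t\,\hat A$, $B':=\sqrt t\,\hat B$ gives $\tr A'=\tr B'=0$, $\norm{A'}_F^2+\norm{B'}_F^2=\tfrac14$, and $A',B'$ still have distinct eigenvalues (nonzero rescalings of $\hat A,\hat B$), so $(A',B')\in\cY$, while $(A',B')\to(A,B)$ as $\hat A\to A$, $\hat B\to B$. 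Hence $\overline{\cY}=\cX$.

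\textbf{Step 3: conclusion, and the main obstacle.} Assuming $\phi\le\tfrac12$ on $\cY$, for $(A,B)\in\cX$ pick $(A_n,B_n)\in\cY$ with $(A_n,B_n)\to(A,B)$; continuity gives $\phi(A,B)=\lim_n\phi(A_n,B_n)\le\tfrac12$, i.e. $\max_{\cX}\phi=\sup_{\cY}\phi\le\tfrac12$, so Conjecture \ref{conjecture: distillability} holds on $\cX$. Combined with the trivial direction this proves Theorem \ref{thm: diag prob set}. The argument is structurally routine; the only place needing care is Step 2, namely arranging that the perturbation separating the eigenvalues respects \emph{both} tracelessness and the normalization simultaneously --- handled by confining perturbations to the traceless subspace and then restoring the norm by the joint rescaling, which leaves the eigenvalues distinct --- and confirming that the traceless subspace really does contain matrices with distinct eigenvalues, so that the density claim is not vacuous.
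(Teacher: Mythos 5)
Your proof is correct and follows the same architecture as the paper's: continuity of $\phi$, density of $\cY$ in $\cX$, and a limiting (equivalently, contrapositive) argument using the compactness of $\cX$, with the trivial direction from $\cY\sue\cX$. The only genuine difference is in how you would establish the density sub-lemma: you argue via the discriminant of the characteristic polynomial being a nonvanishing polynomial on the traceless subspace, whereas the paper's Lemma \ref{lemma: dense} works through Schur triangularization and explicit diagonal perturbations followed by rescaling; your route is somewhat cleaner and avoids the case analysis on eigenvalue multiplicities, but the content is equivalent, and both ultimately rely on the same renormalization trick to stay on the sphere $\norm{A}_F^2+\norm{B}_F^2=\tfrac14$.
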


\begin{lemma} \label{lemma: phi}
    Under the context of Conjecture \ref{conjecture: distillability}, the map $\phi$ defined as follows is continuous
    \begin{equation}
        \begin{aligned}
            \phi: M_4(\bbC) \times M_4(\bbC) &\rightarrow \bbR \\
            A, B &\mapsto \s_1(X)^2 + \s_2(X)^2.
        \end{aligned}
    \end{equation}
\end{lemma}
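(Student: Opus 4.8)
The plan is to show that $\phi$ is a composition of continuous maps. First I would write $\phi = g \circ f$, where $f: M_4(\bbC) \times M_4(\bbC) \to M_{16}(\bbC)$ sends $(A,B) \mapsto X = A \otimes I + I \otimes B$, and $g: M_{16}(\bbC) \to \bbR$ sends $X \mapsto \s_1(X)^2 + \s_2(X)^2$. The map $f$ is continuous because the entries of $X$ are linear (hence polynomial, hence continuous) functions of the entries of $A$ and $B$; the Kronecker sum is manifestly entrywise linear in $(A,B)$.

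The substantive part is the continuity of $g$, i.e. the continuity of singular values as functions of the matrix entries. I would invoke the standard fact (e.g. Corollary 7.3.5 or the discussion around the singular value decomposition in \cite{Horn_Johnson_2012}) that the ordered singular values $\s_1(X) \geq \s_2(X) \geq \cdots$ of a matrix depend continuously on $X$. One clean way to see this: the squared singular values $\s_i(X)^2$ are the ordered eigenvalues of the Hermitian matrix $X^* X$, and the entries of $X^* X$ are polynomials in the entries of $X$ and $\bar X$, hence continuous in $X$; then the eigenvalues of a Hermitian matrix, listed in decreasing order, are continuous functions of the matrix (by Weyl's perturbation inequality, $|\l_i(H) - \l_i(H')| \leq \|H - H'\|_2$, so each $\l_i$ is in fact Lipschitz). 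Alternatively, one may use the variational characterization already recalled in \eqref{equality 1}, namely
\beq \bal
    \s_1(X)^2 + \s_2(X)^2 = \max_{\substack{U \in M_{16,2}(\bbC) \\ U^* U = I_2}} \tr U^* X^* X U,
\eal \eeq
and observe that the right-hand side is a maximum, over the compact Stiefel manifold $\{U : U^* U = I_2\}$, of a function jointly continuous in $(U, X)$; by the standard fact that the pointwise maximum over a fixed compact parameter set of a jointly continuous family is continuous, $g$ is continuous. Composing, $\phi = g \circ f$ is continuous.

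I do not expect a serious obstacle here; the only point requiring mild care is to cite the continuity (indeed Lipschitz continuity via Weyl) of ordered eigenvalues/singular values correctly rather than appealing vaguely to "the characteristic polynomial has continuous roots," since individual unordered eigenvalues need not be globally continuous but the decreasingly ordered list always is. Using the $\max$-over-compact argument sidesteps even that subtlety, so I would present that route as the primary argument and mention the Weyl inequality as an alternative.
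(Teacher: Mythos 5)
Your proposal is correct and matches the paper's argument: both decompose $\phi$ as a composition of the continuous Kronecker-sum map $(A,B)\mapsto X$ with the map $X\mapsto \s_1(X)^2+\s_2(X)^2$, and both justify the latter via the Lipschitz continuity of ordered singular values (Corollary 7.3.5 of \cite{Horn_Johnson_2012}). The additional routes you sketch (Weyl's inequality applied to $X^*X$, or the maximum of the jointly continuous functional $\tr U^*X^*XU$ over the compact Stiefel manifold) are equally valid and would work just as well, but the paper takes the first of your options.
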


\begin{proof}
    According to Corollary 7.3.5 of \cite{Horn_Johnson_2012}, we have
    \begin{equation}
        \forall Y, Z \in M_n(\bbC)\quad\left|\s_i(Y) - \s_i(Z)\right| \leq \norm{Y - Z},
    \end{equation}
    where $\norm{\cdot}$ is the operator norm. Since $\norm{\cdot}$ and $\norm{\cdot}_F$ are equivalent on $M_{16}(\bbC)$, which means
    \begin{equation}
        \exists c > 0 \text{ such that } \left|\s_i(Y) - \s_i(Z)\right| \leq c\norm{Y - Z}_F,
    \end{equation}
    we obtain that the map $\varphi$ defined as follows is continuous with respect to $\norm{\cdot}_F$
    \begin{equation}
        \begin{aligned}
            \varphi: M_{16}(\bbC) &\rightarrow \bbR \\
            X &\mapsto \s_1(X)^2 + \s_2(X)^2.
        \end{aligned}
    \end{equation}
    Plus, the map $\psi: A, B \mapsto A \ox I + I \ox B = X$ is obviously continuous with respect to $\norm{\cdot}_2$ defined in Definition \ref{def: 2 norm}. Therefore, $\phi = \varphi \circ \psi$ is continuous.
\end{proof}

\begin{lemma} \label{lemma: dense}
    The set $\cY$ is a dense subset of $\cX$, with respect to the norm $\norm{\cdot}_2$ defined in Definition \ref{def: 2 norm}.
\end{lemma}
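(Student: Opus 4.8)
The plan is to show that every pair $(A,B)\in\cX$ can be approximated arbitrarily well, in the $\norm{\cdot}_2$ norm, by a pair $(\tilde A,\tilde B)\in\cY$, i.e. with $\tilde A,\tilde B$ each having four distinct eigenvalues while still satisfying $\tr\tilde A=\tr\tilde B=0$ and $\norm{\tilde A}_F^2+\norm{\tilde B}_F^2=\tfrac14$. First I would recall the standard fact (e.g.\ from \cite{Horn_Johnson_2012}) that matrices with distinct eigenvalues are dense in $M_4(\bbC)$: given any $A$, put it in upper-triangular (Schur) form by a unitary, perturb the diagonal entries by tiny amounts to make them pairwise distinct, and conjugate back; this changes $A$ by an arbitrarily small amount in any norm. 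The only subtlety is that this naive perturbation destroys the two constraints $\tr A=0$ and the joint Frobenius normalization, so the perturbed matrices land near $\cX$ but not on it.

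The fix is in two steps. Step one: restore tracelessness. If $A_\e$ is the perturbation with $\tr A_\e=\tau_\e$ where $|\tau_\e|\to 0$, replace it by $A_\e-\tfrac{\tau_\e}{4}I_4$. Subtracting a scalar matrix shifts every eigenvalue by the same constant, so distinctness is preserved, and $\norm{\tfrac{\tau_\e}{4}I_4}_F=\tfrac{|\tau_\e|}{2}\to0$, so we are still close to $A$. Do the same for $B$. Step two: restore the Frobenius normalization. After step one we have traceless $\hat A_\e,\hat B_\e$ with distinct eigenvalues and $\norm{\hat A_\e}_F^2+\norm{\hat B_\e}_F^2=\tfrac14+\delta_\e$ for some $\delta_\e\to 0$. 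Rescale: set $\tilde A_\e=\mu_\e\hat A_\e$, $\tilde B_\e=\mu_\e\hat B_\e$ with $\mu_\e=\bigl(\tfrac{1/4}{1/4+\delta_\e}\bigr)^{1/2}\to 1$. Scaling by a nonzero constant preserves both tracelessness and eigenvalue-distinctness, and now the joint normalization is exact, so $(\tilde A_\e,\tilde B_\e)\in\cY$. Finally $\norm{(\tilde A_\e,\tilde B_\e)-(A,B)}_2\le\norm{(\tilde A_\e,\tilde B_\e)-(\hat A_\e,\hat B_\e)}_2+\norm{(\hat A_\e,\hat B_\e)-(A_\e,B_\e)}_2+\norm{(A_\e,B_\e)-(A,B)}_2$, and each of the three terms tends to $0$ (the first because $|\mu_\e-1|\to0$ and $\norm{\hat A_\e}_2$ is bounded, the second because $|\tau_\e|,|\sigma_\e|\to0$, the third by the Schur-perturbation density argument). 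Hence $\cY$ is dense in $\cX$.

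One technical point worth stating explicitly: one must check that $\cX$ is nonempty and that the construction never produces the degenerate rescaling $\mu_\e$ undefined, which is automatic since $1/4+\delta_\e>0$ for $\e$ small. Also, for pairs $(A,B)\in\cX$ where, say, $A$ already has repeated eigenvalues but $B$ does not, the same argument applies verbatim by perturbing only the offending matrix (or both). I expect the main obstacle to be purely bookkeeping: tracking that the three successive modifications (eigenvalue perturbation, trace correction, norm rescaling) each commute with the property ``has distinct eigenvalues'' and that the cumulative perturbation is $o(1)$; there is no deep difficulty, since subtracting a scalar matrix and multiplying by a scalar are the two operations that visibly preserve distinctness of the spectrum while independently adjusting trace and norm.

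With Lemma \ref{lemma: phi} and Lemma \ref{lemma: dense} in hand, Theorem \ref{thm: diag prob set} follows: the ``only if'' direction is trivial since $\cY\subseteq\cX$; for the ``if'' direction, suppose $\phi(A,B)\le\tfrac12$ for all $(A,B)\in\cY$. Given any $(A,B)\in\cX$, take a sequence $(\tilde A_n,\tilde B_n)\in\cY$ with $(\tilde A_n,\tilde B_n)\to(A,B)$ in $\norm{\cdot}_2$ (Lemma \ref{lemma: dense}); then $\phi(\tilde A_n,\tilde B_n)\le\tfrac12$ for all $n$, and by continuity of $\phi$ (Lemma \ref{lemma: phi}) we get $\phi(A,B)=\lim_n\phi(\tilde A_n,\tilde B_n)\le\tfrac12$. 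Hence Conjecture \ref{conjecture: distillability} holds on all of $\cX$.
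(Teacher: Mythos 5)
Your proposal is correct and follows the same basic perturb-and-rescale strategy as the paper's appendix proof: put each matrix in Schur form, perturb the diagonal to split the repeated eigenvalues, then restore the constraints by operations that are small and preserve spectral distinctness. Two differences are worth noting. First, the paper builds a \emph{traceless} diagonal perturbation (e.g.\ $\diag(\e_1,-\e_1,0,0)$ in the two-repeated-eigenvalue case), so tracelessness is preserved automatically and no trace correction is needed, whereas you allow a general diagonal perturbation and then subtract $\tfrac{\ta_\e}{4}I$; both work, the paper's choice just merges two of your steps. Second --- and here your version is actually more robust --- the paper imposes the stricter per-matrix condition $\norm{A_y}_F=\norm{A_x}_F$ and rescales $A_\e$ by $c=\norm{A_x}_F/\norm{A_\e}_F$. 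That individual rescaling breaks down in the edge case $A_x=0$, which is allowed in $\cX$ when $\norm{B_x}_F^2=\tfrac14$: one cannot produce a matrix with four distinct eigenvalues and zero Frobenius norm. Your joint rescaling by a single factor $\mu_\e=\bigl(\tfrac{1/4}{1/4+\d_\e}\bigr)^{1/2}$ only requires the \emph{sum} $\norm{\hat A_\e}_F^2+\norm{\hat B_\e}_F^2$ to be near $\tfrac14$, which always holds, so it handles this degenerate case for free. In short, same conceptual route, but your normalization step closes a small gap that the paper's argument leaves open.
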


\begin{proof}
    For more details, please refer to section \ref{sec: dense proof} in the appendix.
\end{proof}

\vspace{7pt}
\textbf{Now we prove Theorem \ref{thm: diag prob set}}. Obviously, if Conjecture \ref{conjecture: distillability} holds for $\cX$, it holds for $\cY$. In the following paragraph, we prove the opposite direction.

\vspace{7pt}
The functions $\norm{\cdot}_2: (A, B) \mapsto \sqrt{\norm{A}_F^2 + \norm{B}_F^2}$ and $\tr : A \mapsto \tr A$ are continuous, and the sets $\{\frac{1}{4}\}$ and $\{0\}$ are closed. The set $\cX$ is defined as the intersection of the preimages of ${\frac{1}{4}}$ and ${0}$ under $\norm{\cdot}_2$ and $\tr$, respectively. Since the preimage of a closed set under a continuous function is closed, $\cX$ is closed in a finite-dimensional space and is therefore compact.

\vspace{7pt}
Furthermore, the map $\phi$ defined in Lemma \ref{lemma: phi} is continuous, meaning $\phi$ attains its maximum on the compact set $\cX$. If there exists a pair $(A_x, B_x) \in \cX$ such that $\phi(A_x, B_x) = p > \frac{1}{2}$, then, according to Lemma \ref{lemma: dense}, a matrix pair $(A_y, B_y) \in \cY$ can be chosen to be arbitrarily close to $(A_x, B_x)$. Consequently, $\phi(A_y, B_y)$ can be made arbitrarily close to $\phi(A_x, B_x)$, surpassing $\frac{1}{2}$.

\vspace{7pt}
By contrapositive reasoning, if Conjecture \ref{conjecture: distillability} holds for $\cX$, it also holds for $\cY$. \textbf{At this point, Theorem \ref{thm: diag prob set} is established.}

\section{Conclusions}
\label{sec:con}

We have investigated the long-standing open problem on two-copy Werner states' distillability conjecture in terms of $4$ by $4$ matrices. We established and proved two specific instances of the conjecture. Our results represent the latest progress on the conjecture. The next step is to study more cases such as the direct sum of $3\times3$ matrices and a number. Another approach is to explore more mathematical properties of the Frobenius norm of commutators employed in this paper. We could also exploit the connection between matrices with distinct eigenvalues and the distillability conjecture, simplifying arguments by using diagonalizable matrices.

\section*{ACKNOWLEDGMENTS}
Authors were supported by the NNSF of China (Grant No. 12471427).

\appendix
\section{Proof of inequality in Lemma \ref{lemma: 2nd case 2 2 bloc-diag}}\label{sec: proof of inequality}
\begin{equation}
\begin{aligned}
        \sigma_1(Y_{11})^2 +& \sigma_1(Y_{22})^2 \\
        =& \lambda_{\max}(Z_{11}) + \lambda_{\max}(Z_{22}) \\ 
        \leq& \lambda_{\max}({Z_{11}}^{(11)}) + \lambda_{\max}({Z_{11}}^{(22)}) + \lambda_{\max}({Z_{22}}^{(11)}) + \lambda_{\max}({Z_{22}}^{(22)}) \\
        =& \frac{1}{2} \left(2 \left|a_{12}\right|^2 + 2 \left|a + b\right|^2 + \left|b_{12}\right|^2 + \left|b_{21}\right|^2 + \sqrt{\left(\left|b_{12}\right|^2 - \left|b_{21}\right|^2\right)^2 + 4\left|b_{21} \left(\bar a + \bar b\right) + \bar b_{12} \left(a + b\right)\right|^2}\right) + \\
        & \frac{1}{2} \left(2 \left|a_{21}\right|^2 + 2 \left|a + b\right|^2 + \left|b_{12}\right|^2 + \left|b_{21}\right|^2 + \sqrt{\left(\left|b_{12}\right|^2 - \left|b_{21}\right|^2\right)^2 + 4\left|b_{21} \left(\bar a + \bar b\right) + \bar b_{12} \left(a + b\right)\right|^2}\right) + \\
        & \frac{1}{2} \left(2 \left|a_{12}\right|^2 + 2 \left|a - b\right|^2 + \left|b_{34}\right|^2 + \left|b_{43}\right|^2 + \sqrt{\left(\left|b_{34}\right|^2 - \left|b_{43}\right|^2\right)^2 + 4\left|b_{43} \left(\bar a - \bar b\right) + \bar b_{34} \left(a - b\right)\right|^2}\right) + \\
        & \frac{1}{2} \left(2 \left|a_{21}\right|^2 + 2 \left|a - b\right|^2 + \left|b_{34}\right|^2 + \left|b_{43}\right|^2 + \sqrt{\left(\left|b_{34}\right|^2 - \left|b_{43}\right|^2\right)^2 + 4\left|b_{43} \left(\bar a - \bar b\right) + \bar b_{34} \left(a - b\right)\right|^2}\right) \\
        \leq& \frac{1}{2} \left(2 \left|a_{12}\right|^2 + 2 \left|a + b\right|^2 + \left|b_{12}\right|^2 + \left|b_{21}\right|^2 + \sqrt{\left(\left|b_{12}\right| - \left|b_{21}\right|\right)^2 \left(\left|b_{12}\right| + \left|b_{21}\right|\right)^2 + 4\left(\left|b_{21}\right|+\left|b_{12}\right|\right)^2\left|a + b\right|^2}\right) + \\
        & \frac{1}{2} \left(2 \left|a_{21}\right|^2 + 2 \left|a + b\right|^2 + \left|b_{12}\right|^2 + \left|b_{21}\right|^2 + \sqrt{\left(\left|b_{12}\right| - \left|b_{21}\right|\right)^2 \left(\left|b_{12}\right| + \left|b_{21}\right|\right)^2 + 4\left(\left|b_{21}\right|+\left|b_{12}\right|\right)^2\left|a + b\right|^2}\right) + \\
        & \frac{1}{2} \left(2 \left|a_{12}\right|^2 + 2 \left|a - b\right|^2 + \left|b_{34}\right|^2 + \left|b_{43}\right|^2 + \sqrt{\left(\left|b_{34}\right| - \left|b_{43}\right|\right)^2 \left(\left|b_{34}\right| + \left|b_{43}\right|\right)^2 + 4\left(\left|b_{43}\right|+\left|b_{34}\right|\right)^2\left|a - b\right|^2}\right) + \\
        & \frac{1}{2} \left(2 \left|a_{21}\right|^2 + 2 \left|a - b\right|^2 + \left|b_{34}\right|^2 + \left|b_{43}\right|^2 + \sqrt{\left(\left|b_{34}\right| - \left|b_{43}\right|\right)^2 \left(\left|b_{34}\right| + \left|b_{43}\right|\right)^2 + 4\left(\left|b_{43}\right|+\left|b_{34}\right|\right)^2\left|a - b\right|^2}\right) \\
        =& \frac{1}{2} \left(2 \left|a_{12}\right|^2 + 2 \left|a + b\right|^2 + \left|b_{12}\right|^2 + \left|b_{21}\right|^2 + \left(\left|b_{21}\right|+\left|b_{12}\right|\right) \sqrt{\left(\left|b_{12}\right| - \left|b_{21}\right|\right)^2 + 4\left|a + b\right|^2}\right) + \\
        & \frac{1}{2} \left(2 \left|a_{21}\right|^2 + 2 \left|a + b\right|^2 + \left|b_{12}\right|^2 + \left|b_{21}\right|^2 + \left(\left|b_{21}\right|+\left|b_{12}\right|\right) \sqrt{\left(\left|b_{12}\right| - \left|b_{21}\right|\right)^2 + 4\left|a + b\right|^2}\right) + \\
        & \frac{1}{2} \left(2 \left|a_{12}\right|^2 + 2 \left|a - b\right|^2 + \left|b_{34}\right|^2 + \left|b_{43}\right|^2 + \left(\left|b_{43}\right|+\left|b_{34}\right|\right) \sqrt{\left(\left|b_{34}\right| - \left|b_{43}\right|\right)^2 + 4\left|a - b\right|^2}\right) + \\
        & \frac{1}{2} \left(2 \left|a_{21}\right|^2 + 2 \left|a - b\right|^2 + \left|b_{34}\right|^2 + \left|b_{43}\right|^2 + \left(\left|b_{43}\right|+\left|b_{34}\right|\right) \sqrt{\left(\left|b_{34}\right| - \left|b_{43}\right|\right)^2 + 4\left|a - b\right|^2}\right) \\
        \leq& \frac{1}{2} \left(2 \left|a_{12}\right|^2 + 2 \left|a + b\right|^2 + \left|b_{12}\right|^2 + \left|b_{21}\right|^2 + \frac{1}{2}\left(\left|b_{21}\right|+\left|b_{12}\right|\right)^2 + \frac{1}{2} \left(\left|b_{12}\right| - \left|b_{21}\right|\right)^2 + 2\left|a + b\right|^2\right) + \\
        & \frac{1}{2} \left(2 \left|a_{21}\right|^2 + 2 \left|a + b\right|^2 + \left|b_{12}\right|^2 + \left|b_{21}\right|^2 + \frac{1}{2}\left(\left|b_{21}\right|+\left|b_{12}\right|\right)^2 + \frac{1}{2} \left(\left|b_{12}\right| - \left|b_{21}\right|\right)^2 + 2\left|a + b\right|^2\right) + \\
        & \frac{1}{2} \left(2 \left|a_{12}\right|^2 + 2 \left|a - b\right|^2 + \left|b_{34}\right|^2 + \left|b_{43}\right|^2 + \frac{1}{2}\left(\left|b_{43}\right|+\left|b_{34}\right|\right)^2 + \frac{1}{2} \left(\left|b_{34}\right| - \left|b_{43}\right|\right)^2 + 2\left|a - b\right|^2\right) + \\
        & \frac{1}{2} \left(2 \left|a_{21}\right|^2 + 2 \left|a - b\right|^2 + \left|b_{34}\right|^2 + \left|b_{43}\right|^2 + \frac{1}{2}\left(\left|b_{43}\right|+\left|b_{34}\right|\right)^2 + \frac{1}{2} \left(\left|b_{34}\right| - \left|b_{43}\right|\right)^2 + 2\left|a - b\right|^2\right) \\
        =& 8|a|^2 + 8|b|^2 + 2(|a_{12}|^2 + |a_{21}|^2 + |b_{12}|^2 + |b_{21}|^2 + |b_{34}|^2 + |b_{43}|^2) = \frac{1}{2}.
\end{aligned}
\end{equation}

\section{Proof of Lemma \ref{lemma: dense}}\label{sec: dense proof}

In general, in order to prove that $\cY$ is dense in $\cX$, we prove that the closure of $\cY$ is the set $\cX$, which, according to the definition of closure, is equivalent to, for every $x \in \cX$ and $r>0$, the open ball centered at $x$ with radius equals to $r$ always contains at least an element in set $\cY$. In our scenarios, it means that for every $(A_x, B_x) \in \cX$ and $r>0$, we can always find a matrix pair $(A_y,B_y) \in \cY$ that satisfies the following conditions
    \begin{enumerate}
        \item $\norm{(A_y, B_y) - (A_x, B_x)}_2 \leq r$, \label{condition: 4}
        \item $\norm{A_y}_F^2 + \norm{B_y}_F^2 = \frac{1}{4}$, \label{condition: 5}
        \item $\tr A_y = \tr B_y = 0$, \label{condition: 6}
        \item $A_y$ and $B_y$ only possess distinct eigenvalues. \label{condition: 7}
    \end{enumerate}
    To facilitate the following analysis, we replace the 4 conditions above (\ref{condition: 4}, \ref{condition: 5}, \ref{condition: 6} and \ref{condition: 7}) by following conditions which are stricter
    \begin{enumerate}[resume]
        \item $\norm{A_y - A_x}_F \leq \frac{r}{\sqrt 2}$ and $\norm{B_y - B_x}_F \leq \frac{r}{\sqrt 2}$, \label{condition: 8}
        \item $\norm{A_y}_F = \norm{A_x}_F$ and $\norm{B_y}_F = \norm{B_x}_F$, \label{condition: 9}
        \item $\tr A_y = \tr B_y = 0$, \label{condition: 10}
        \item $A_y$ and $B_y$ only possess distinct eigenvalues. \label{condition: 11}
    \end{enumerate}
    Without loss of generality, we can only work on $A_y$ and apply the same method to $B_y$ to obtain a matrix pair $(A_y, B_y) \in \cY$.

    \vspace{7pt}
    Since singular values are unitarily invariant, according to Schur triangularization, we can always assume that $A_x$ is upper triangular
    \begin{align}
        A_x = \begin{bmatrix}
            a_{11}^{(x)}& * & * &*\\
		  0 & a_{22}^{(x)} & * & *\\
		  0 & 0 & a_{33}^{(x)} & *\\
		  0 & 0 & 0 & a_{44}^{(x)}\\
        \end{bmatrix}.
    \end{align}
    For more details, please refer to \cite{Horn_Johnson_2012}. 

    \vspace{7pt}
    There are three different cases
    \begin{enumerate}[label = (\roman*)]
        \item $A_x$ possesses 2 identical eigenvalues, \label{case: diag 4}
        \item $A_x$ possesses 3 identical eigenvalues, \label{case: diag 5}
        \item $A_x$ possesses 4 identical eigenvalues. \label{case: diag 6}
    \end{enumerate}
    \textbf{Initially, let us examine case \ref{case: diag 4}}, we can assume that $a_{11}^{(x)} = a_{22}^{(x)}$, then, we add a perturbation $\epsilon$ to $A_x$
    \begin{align}
        A_\epsilon = A_x + \epsilon = \begin{bmatrix}
            a_{11}^{(x)} + \epsilon_1 & * & * & * \\
		  0 & a_{22}^{(x)} - \epsilon_1 & * & * \\
		  0 & 0 & a_{33}^{(x)} & * \\
		  0 & 0 & 0 & a_{44}^{(x)} \\
        \end{bmatrix}, 
    \end{align}
    where
    \begin{align}
        \epsilon = \begin{bmatrix}
            \epsilon_1 & 0 & 0 & 0 \\
            0 & -\epsilon_1 & 0 & 0 \\
            0 & 0 & 0 & 0 \\
            0 & 0 & 0 & 0
        \end{bmatrix}.
    \end{align}
    Conditions \ref{condition: 10} and \ref{condition: 11} are automatically satisfied. Since we add a perturbation, $\norm{A_\epsilon}_F \neq \norm{A_x}_F$, we can multiply $A_\epsilon$ by a factor of $c = \frac{\norm{A_x}_F}{\norm{A_\epsilon}_F}$, and set $A_y = c A_\epsilon$ such that condition \ref{condition: 9} is satisfied. 
    \begin{align}
        \norm{A_y - A_x}_F &= \norm{A_y - A_\epsilon + A_\epsilon - A_x}_F \\
        &\leq \norm{(c - 1) A_\epsilon}_F + \norm{A_\epsilon - A_x}_F.
    \end{align}
    In order to satisfy condition \ref{condition: 8}, we can choose an arbitrarily small $\epsilon$ such that $c$ is infinitely close to $1$. Now, we have
    \begin{align}
        \forall r > 0, \exists \epsilon_1 \in \bbC \text{ such that } \norm{A_y - A_x}_F \leq \frac{r}{\sqrt 2}.
    \end{align}
    
    \textbf{Subsequently, we examine case \ref{case: diag 5}}, we can assume that $a_{11}^{(x)} = a_{22}^{(x)} = a_{33}^{(x)}$. The strategy is similar to case \ref{case: diag 4}. Firstly, we add a perturbation $\epsilon$
    \begin{align}
        A_\epsilon = A_x + \epsilon = \begin{bmatrix}
            a_{11}^{(x)} + \epsilon_1 & * & * & * \\
		  0 & a_{22}^{(x)} + \epsilon_2 & * & * \\
		  0 & 0 & a_{33}^{(x)} - \epsilon_1 - \epsilon_2 & * \\
		  0 & 0 & 0 & a_{44}^{(x)} \\
        \end{bmatrix}, 
    \end{align}
    where
    \begin{align}
        \epsilon = \begin{bmatrix}
            \epsilon_1 & 0 & 0 & 0 \\
            0 & \epsilon_2 & 0 & 0 \\
            0 & 0 & -\epsilon_1 - \epsilon_2 & 0 \\
            0 & 0 & 0 & 0
        \end{bmatrix}.
    \end{align}
    Then we set $A_y = \frac{\norm{A_x}_F}{\norm{A_\epsilon}_F} A_\epsilon$.

    \vspace{7pt}
    \textbf{At last, we examine case \ref{case: diag 6}}, we can assume that $a_{11}^{(x)} = a_{22}^{(x)} = a_{33}^{(x)} = a_{44}^{(x)}$. The strategy is also similar. We add a perturbation $\epsilon$
    \begin{align}
        A_\epsilon = A_x + \epsilon = \begin{bmatrix}
            a_{11}^{(x)} + \epsilon_1 & * & * & * \\
		  0 & a_{22}^{(x)} - \epsilon_1 & * & * \\
		  0 & 0 & a_{33}^{(x)} + \epsilon_2 & * \\
		  0 & 0 & 0 & a_{44}^{(x)} - \epsilon_2 \\
        \end{bmatrix}, 
    \end{align}
    where
    \begin{align}
        \epsilon = \begin{bmatrix}
            \epsilon_1 & 0 & 0 & 0 \\
            0 & -\epsilon_1 & 0 & 0 \\
            0 & 0 & \epsilon_2 & 0 \\
            0 & 0 & 0 & -\epsilon_2
        \end{bmatrix}.
    \end{align}
    Then we set $A_y = \frac{\norm{A_x}_F}{\norm{A_\epsilon}_F} A_\epsilon$.

    \vspace{7pt}
    At this point, we have proven that for every $(A_x, B_x) \in \cX$ we can always find $(A_y, B_y) \in \cY$ that is infinitely close to $(A_x, B_x)$ in cases \ref{case: diag 4}, \ref{case: diag 5} and \ref{case: diag 6}.

\bibliographystyle{unsrt}

\bibliography{citation}

\end{document}